\newcommand*{\indep}{%
	\mathbin{%
		\mathpalette{\@indep}{}%
	}%
}
\newcommand*{\nindep}{%
	\mathbin{
		\mathpalette{\@indep}{\not}
	}%
}
\newcommand*{\@indep}[2]{%
	\sbox0{$#1\perp\m@th$}
	\sbox2{$#1=$}
	\sbox4{$#1\vcenter{}$}
	\rlap{\copy0}
	\dimen@=\dimexpr\ht2-\ht4-.2pt\relax
	\kern\dimen@
	{#2}%
	\kern\dimen@
	\copy0 
}
\newtheorem{theorem}{Theorem}
\newtheorem{lemma}{Lemma}
\newtheorem{proposition}{Proposition}
\newtheorem{assumption}{Assumption}
\def\var{\textnormal{Var}}
\newcommand\Cone{{\mathcal{C}_1}}
\newcommand\Ctwo{{\mathcal{C}_2}}
\newcommand{\mathleft}{\@fleqntrue\@mathmargin0pt}
\newcommand{\mathcenter}{\@fleqnfalse}
\def\T{{ \mathrm{\scriptscriptstyle T} }}
\newcommand\MR{{\textnormal{CAL}}}
\newcommand\DR{{\textnormal{DR}}}
\newcommand\Ber{{\textnormal{Ber}}}
\newcommand\logit{{\textnormal{logit}}}
\newcommand\lexpd{\text{\text{lexpd}}}
\newcommand\lnetw{\text{\text{lnetw}}}
\newcommand\lincome{\text{\text{lincome}}}
\newcommand\age{\text{\text{age}}}
\newcommand\sex{\text{\text{sex}}}
\newcommand\marital{\text{\text{marital}}}
\newcommand\eduone{\text{\text{edu1}}}
\newcommand\edutwo{\text{\text{edu2}}}
\newcommand\eduthr{\text{\text{edu3}}}
\newcommand\white{\text{\text{white}}}
\newcommand\black{\text{\text{black}}}
\begin{document}

	\def\spacingset#1{\renewcommand{\baselinestretch}%
		{#1}\small\normalsize} \spacingset{1}

	
		\title{\bf Calibrated regression estimation using empirical likelihood  under data fusion}
 		\author{Wei Li$^1$, Shanshan Luo$^2$, and Wangli Xu$^1$	\vspace{0.5cm}
 		\\
 		$^1$Center for Applied Statistics and School of Statistics, \\
 		Renmin University of China\vspace{0.1cm}
 		\\
 		$^2$School of Mathematical Sciences, Peking University}
		\date{}
		\maketitle
	
		
		\bigskip
		\begin{abstract}
		Data analysis based on information  from several sources is common in economic and biomedical studies.
 This setting is often referred to as the data fusion problem, which differs from traditional missing data problems since no complete data is observed for any subject.
 We consider a regression  analysis when the outcome variable and some  covariates are collected from two different sources. By leveraging the common variables observed in both data sets, doubly robust estimation procedures  are proposed in the literature to protect against possible model misspecifications. However, they employ only a single propensity score model for the data fusion process and a single imputation model for the  covariates available in one data set. It may be  questionable to assume that either model is correctly specified in practice. We therefore propose an approach that calibrates multiple propensity score  and  imputation models to gain more protection based on empirical likelihood methods. The resulting estimator is consistent when any one of those models is correctly specified and is robust against extreme values of the fitted propensity scores. We also establish its asymptotic normality property  and discuss the semiparametric estimation efficiency. Simulation studies show that the proposed estimator has substantial advantages over  existing doubly robust estimators, and an assembled  U.S. household expenditure data example is used for illustration.  
		\end{abstract}
		
		\noindent%
		{\it Keywords:}  Data fusion; Double robustness; Empirical likelihood; Extreme weights; Model calibration. 
		\vfill
		
		\newpage
		\spacingset{1.8} 
	\section{Introduction}

	There are many situations in practice where all relevant variables for addressing particular scientific hypotheses  cannot be collected from a single data source. For example, in a study investigating effects of housing project participation on living qualities of poor families, 
the Current Population Survey data consists of project participation and the Census data consists of living quality attributes \citep{currie2000public,shu2020improved}.
Another example studies the relationship between households' consumption and saving behavior, where information on consumption is available in the Panel Study of Income Dynamics, and information on  wealth is obtained from separate data sources, e.g., Health and Retirement Survey  \citep{blundell2008consumption,buchinsky2022estimation}. Two-sample instrumental variable analysis in causal inference literature is an additional example with important variables collected separately, and it includes two-sample Mendelian randomization as one of the most exciting applications in genetic epidemiology \citep{angrist1992effect,pierce2013efficient}. 
It is natural to try to combine the two data sets to answer specific research questions in these scenarios. Analysis of such combined data is challenging, because no subject belongs to both sources and some variables available in one source are never observed in the other one. This setting is typically referred to as the data fusion problem \citep{d2006statistical,ridder2007econometrics,rassler2012statistical}. 
Due to violations of the positivity assumption for any subject, data fusion problems are distinct from most missing-data problems and require specific analysis techniques.



In this article, we consider the setting where a primary sample provides measurements of the outcome $Y$ and covariates $V$,  while an auxiliary sample contains measurements of $V$ and additional covariates $W$. That is, the outcome $Y$ is collected only in the primary data, a subset of covariates $W$ is observed only in the auxiliary data, and the common variables $V$ are available in both data sets.
We are interested in regression analysis of the outcome given all covariates, and our aim is to estimate the $p$-dimensional parameter $\theta_0=(\theta_{10}^\T,\theta_{20}^\T)^\T$ defined through $E(Y\mid W,V)=\mu(\theta_{10}^\T W+\theta_{20}^\T V)$, where $\mu(\cdot)$ is some known monotone and continuously differentiable function. If a random sample of all variables is available, then consistent estimation of $\theta_0$ under regularity conditions is straightforward by solving a set of properly-chosen estimating equations. However,  when two data sets from separate sources are fused together,
the parameter of interest  $\theta_0$  cannot even be simply identified  from the observed data without additional conditions.
A prominent strategy in existing literature is to  impose one of the following conditional independence assumptions: $Y\indep W\mid V$ or $Y\indep V\mid W$ \citep{ogburn2021warning,miao2021invited}. The former assumption is   fundamental
in statistical matching literature \citep{d2006statistical,rassler2012statistical,hirukawa2018consistent}, where the variables $Y$ and $W$ collected from separate samples are matched by the common variables $V$. The latter is analogous to exclusion restrictions in instrumental variable analysis, which forms the basis for the validity of two-sample instrumental variable estimators \citep{angrist1992effect,graham2016efficient}. However, these two assumptions are problematic in our setting if a potential non-null association between $Y$ and $W$ (or $V$) is the scientific hypothesis under consideration. We clearly pinpoint
more reasonable  conditions under which $\theta_0$ is identifiable in Section~\ref{sec:identification}.

When model identifiability has been guaranteed, a large class of  semiparametric and parametric estimation methods for fused data sets is proposed. \cite{chen2008semiparametric} utilized nonparametric series estimation on a propensity score model for the data source process or an imputation model for the partly observed covariates to achieve consistent results in separable moment restriction models. However, such methods hinge on certain smoothness conditions that are often assumed and can be problematic with a high-dimensional vector of the common variables $V$. Recently, \cite{graham2016efficient} proposed a more flexibly parametric modeling approach, which is doubly robust in that it is consistent if the  propensity score  or the imputation model is correctly specified in a certain class of nuisance models.
To break this limitation, \cite{shu2020improved} and \cite{evans2021doubly}  developed general doubly robust estimators with unrestricted nuisance model specifications. Doubly robust procedures offer some protection against model misspecification, but as only a single propensity score model  and a single imputation model are allowed, they may not provide sufficient protection and it is restrictive to assume that either model is correctly specified in practice. In addition, since these doubly robust procedures are based on inverse probability weighting, they may suffer from large variances due to extreme probability values.

We propose a calibration procedure 
that  allows multiple models for the propensity score and imputation models to gain more protection in this paper.
It is appealing to fit multiple models in data fusion problems, especially when there are a large number of common variables, since a  correct specification for the nuisance models in such a case is difficult.  Although a variety of variable selection techniques can be employed for model building,  their performances typically rely on levels of tuning parameters, and different tuning parameters may lead to different models. Selecting tuning parameters based on some information criterion would be an option, but the selection itself brings additional uncertainty in the working model specifications. It thus seems desirable to postulate a set of reasonable models, each involving different subsets of covariates and possibly different link functions, and to incorporate them simultaneously. 
While the idea of fitting multiple models has been well developed in missing data literature \citep{han2013estimation,han2014multiply,chan2014oracle,chen2017multiply,li2020demystifying}, it remains unclutivated in the area of
data fusion studies. The implementation of this idea in such studies
becomes more complicated and requires additional techniques.		
For example,	\cite{han2014multiply} considered regression analysis with missing outcome data, in which only one calibration weight is needed to incorporate multiple models.  The calibration weight is constructed by estimating the conditional expecation of the outcome given all covariates that can be easily implemented. However, such a strategy cannot be directly applied to data fusion problems since some covariates are not fully observed. We thus employ an imputation approach to obtain approximations of the conditional expectation of the outcome given only always-observed covariates, and use these approximations to construct two different calibration weights that are  needed for
data fusion analysis. This leads to complexity in both
the implementation and theoretical investigations, where some additional empirical process theories are required for the asymptotic results of the proposed estimator.
Different from existing doubly robust procedures for data fusion problems, our  estimation strategy relies on the empirical likelihood method \citep{owen2001empirical,qin1994empirical,qin2007empirical}, which circumvents the use of inverse probabilities. The resulting estimator is robust against extreme values of the fitted propensity scores. Furthermore,
the proposed estimator enjoys well-established theoretical properties. Under some regularity conditions, it
is consistent and asymptotically normal if any one of the propensity score and imputation models is correct.  The estimator also attains semiparametric efficiency bound when both one of  propensity scores and one of imputation models are correctly specified, without requiring knowledge of which two models are correct.

The remainder of this paper is organized as follows. In Section~\ref{sec:identification}, we introduce notation and discuss identifiability of the parameter of interest. In Section~\ref{sec:estimator}, we provide a calibration-based estimation approach. We then establish the asymptotic results for the proposed estimator in Section~\ref{sec:asymptotic}. We study the
finite-sample performance of the proposed approach via both simulation studies and an assembled U.S. household expenditure  data example in Sections~\ref{sec:sim} and~\ref{sec:application}, respectively. We conclude with a discussion in Section~\ref{sec:discussion} and relegate proofs to
the Appendix. 

	\section{Notation, assumptions and identifiability}\label{sec:identification}

Suppose there are $n$ individuals who are merged from two different sources.
Let $R$ denote the data source indicator with $R=1$ if a subject is observed in the primary sample and $R=0$ if it is observed in the auxiliary sample. Let $m=\sum_{i=1}^n R_i$ be the number of subjects who are observed in the primary sample, and index those subjects by $i=1,\ldots,m$ without loss of generality.
The outcome  $Y$ is only available in the primary sample, some covariates $W$ are only available in the auxiliary sample, and the other covariates $V$ are observed in both data sources.  Let $f(\cdot)$ denote the probability density or mass function of a random variable (vector), and let $\pi(v)=f(R=1\mid V=v)$ denote the probability that a subject is observed in the primary sample given the common variables $V=v$.
We make the following assumptions.



\begin{assumption}\label{ass:ignorability}
	$R\indep (Y,W)\mid V$.
\end{assumption}

\begin{assumption}\label{ass:positivity}
	$\delta<\pi(v)<1-\delta$ for all $v$ and some fixed constant $\delta\in(0,1)$.
\end{assumption}

Assumption~\ref{ass:ignorability} is similar to missing at random in missing data problems. It implies that the  conditional distribution of $Y$ or $W$  given the always-measured variables $V$  does not vary across the primary and auxiliary populations, but allows the marginal distributions of $V$ to differ between these two populations. Assumption~\ref{ass:positivity}  indicates that for each subject in one data source, the probability of observing a matching unit in the other data source with similar values of $V$ is positive.
It is different from the usual positivity assumption in missing data problems which requires a positive chance of observing complete data for each subject. Both assumptions are basic and frequently made in data fusion problems \citep{chen2008semiparametric}. 	
The identifiability of the parameter  $\theta_0$ in our regression problem can be guaranteed if the conditional distribution $f(Y\mid W,V)$ is identified. For identification of this conditional distribution,  we need further assumptions.

\begin{assumption}\label{ass:instrumental-variable}
	There exists a variable $Z$ in the always-observed covariates $V$ such that	$Y\indep Z\mid W,X$, where $X$  denotes the remaining covariates in $V$.
\end{assumption}
\begin{assumption}\label{ass:completeness-condition}
	For any function $\phi(W,X)$ with finite mean, $E\{\phi(W,X)\mid Z,X\}=0$ implies $\phi(W,X)=0$ almost surely.
\end{assumption}

For convenience,  we may use the notation $V$ and $(Z,X^\T)^\T$ interchangeably below.
Assumption~\ref{ass:instrumental-variable} reveals that
the scalar covariate $Z$ affects the outcome $Y$ only through its association with $W$ and $X$. In contrast to the commonly-used exclusion restriction assumption $Y\indep V\mid W$, this assumption requires only one of the covariates in $V$ to be conditionally independent of $Y$, which is weaker and more reasonable in practice. \cite{ridder2007econometrics} proposed similar assumptions and discussed identification in the context of categorical variables. To achieve a general identification result, we impose the completeness condition in Assumption~\ref{ass:completeness-condition}. This condition is widely used for model
identifiability across various disciplines   and can be satisfied for many commonly-used parametric models \citep{newey2003instrumental,d2010new}.


\begin{proposition}\label{prop:identifiable}
	Under Assumptions~\ref{ass:ignorability}--\ref{ass:completeness-condition}, the conditional distribution $f(Y\mid W,V)$ is identifiable.
\end{proposition}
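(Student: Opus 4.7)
The plan is to reduce the problem to a Fredholm-type integral equation whose unique solvability is exactly guaranteed by the completeness condition in Assumption~\ref{ass:completeness-condition}. The conditional distribution $f(Y\mid W,V)$ depends on both sets of covariates that are never observed jointly, so the strategy is to express it in terms of quantities that \emph{are} identifiable from the two separate data sources and then invert an implied integral relation.

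First I would identify the two building blocks. Since $V$ is observed for every subject, the joint distribution of $(R,V)$ and hence the marginal $f(V)$ and $\pi(v)$ are identifiable from the pooled data, with Assumption~\ref{ass:positivity} ensuring all relevant conditioning events have positive probability. Assumption~\ref{ass:ignorability} gives $f(Y\mid V,R=1)=f(Y\mid V)$ and $f(W\mid V,R=0)=f(W\mid V)$; the left-hand sides are identifiable from the primary and auxiliary samples, respectively, so both $f(Y\mid V)$ and $f(W\mid V)$ are identifiable.

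Next, I would exploit Assumption~\ref{ass:instrumental-variable}. Writing $V=(Z,X^\T)^\T$, the conditional independence $Y\indep Z\mid W,X$ yields $f(Y\mid W,V)=f(Y\mid W,X)$. Decomposing $f(Y\mid V)$ through $W$ and applying this reduction gives, for every $y$,
\begin{equation*}
f(y\mid Z,X)=\int f(y\mid W,X)\,f(W\mid Z,X)\,dW,
\end{equation*}
where the left-hand side and the kernel $f(W\mid Z,X)$ are already identified from step one. So $f(y\mid W,X)$, viewed as an unknown function of $(W,X)$, satisfies an integral equation whose data-side ingredients are fully known.

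Finally I would invoke Assumption~\ref{ass:completeness-condition} to establish uniqueness. If $f_1(y\mid W,X)$ and $f_2(y\mid W,X)$ were two solutions, then $\phi(W,X)=f_1(y\mid W,X)-f_2(y\mid W,X)$ would satisfy $E\{\phi(W,X)\mid Z,X\}=0$ almost surely, so completeness forces $\phi\equiv 0$ a.s. Hence $f(y\mid W,X)$, and therefore $f(Y\mid W,V)$, is identifiable. The main conceptual obstacle is recognizing that the identification is a non-parametric inverse problem rather than a direct formula: once this is framed, Assumption~\ref{ass:completeness-condition} is tailor-made for delivering injectivity of the conditional-expectation operator $\phi\mapsto E\{\phi(W,X)\mid Z,X\}$, and the only technical care needed is verifying integrability so that the integral equation and the completeness hypothesis apply to the relevant $\phi$ for each fixed $y$.
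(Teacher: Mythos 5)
Your proposal is correct and follows essentially the same route as the paper: identify $f(Y\mid V)$ and $f(W\mid V)$ from the two samples via Assumption~\ref{ass:ignorability}, write $f(y\mid v)=E\{f(y\mid W,x)\mid z,x\}$ using Assumption~\ref{ass:instrumental-variable}, and invoke the completeness condition of Assumption~\ref{ass:completeness-condition} to force uniqueness of the solution. The extra remarks about identifying $\pi(v)$ and checking integrability are harmless elaborations on the same argument.
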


The identifiability of $f(Y\mid W,V)$ in Proposition~\ref{prop:identifiable} implies that the regression parameter $\theta_0$ is identifiable under Assumptions~\ref{ass:ignorability}--\ref{ass:completeness-condition}.
When Assumptions~\ref{ass:instrumental-variable} and~\ref{ass:completeness-condition} are removed, one may also achieve (local) identifiability of $\theta_0$ based on the moment condition:
\begin{align}\label{eqn:estimatingeq}
	E\Big[ \big\{Y-\mu(\theta_{1}^\T W+\theta_{2}^\T V)\big\}t(V;\theta)\Big]=0,
\end{align}
where $t(V;\theta)$ is a  user-specified $p$-dimensional vector function that may depend on $V$ and $\theta$. Denote $\Gamma(\theta)$ to be the $p\times p$ derivative matrix of the left-hand side of the above equation, i.e., $\Gamma(\theta) = E\{Y\partial t(V;\theta)/\partial\theta-\partial s(W,V;\theta)/\partial\theta\}$,
where $s(W,V;\theta)=\mu(\theta_{1}^\T W+\theta_{2}^\T V)t(V;\theta)$. We simply write $\Gamma = \Gamma(\theta_0)$.
If $\Gamma$ is of full rank, then $\theta_0$ is (locally) identifiable \citep{bandeen1997latent}. Under some scenarios, the global identifiability of $\theta_0$ can be guaranteed. For example, 
when $\mu(\cdot)$ is the identity function, $\theta_0$ is identifiable if there exists a nonlinear term of $V$ in $E(W\mid V)$ \citep{evans2021doubly,miao2021invited}. In the next section, we assume $\theta_0$ has been identified and propose a procedure to estimate $\theta_0$ based on the moment condition~\eqref{eqn:estimatingeq} with a fixed function $t(V;\theta)$.

	\section{Proposed estimator}\label{sec:estimator} 
Let $\Cone=\{\pi^j(\eta^j):j=1,\ldots J\}$ be a set of $J$ propensity score models for $\pi(V)$
and $\Ctwo=\{a^k(\gamma^k):k=1,\ldots,K\}$ be a set of $K$ imputation models for $f(W\mid V)$. Here $\eta^j$ and $\gamma^k$ are the corresponding parameters, and we let $\widehat\eta^j$ and $\widehat\gamma^k$ denote their estimators. Usually, each $\widehat\eta^j$ is obtained by maximizing the binomial likelihood function:
\begin{align*}
	\prod_{i=1}^n \big\{\pi_i^j(\eta^j)\big\}^{R_i}\big\{1-\pi_i^j(\eta^j)\big\}^{1-R_i},
\end{align*} 
and each $\widehat\gamma^k$ is similarly obtained by maximum likelihood estimation based on auxiliary data.
Let $\widehat\theta^k$ denote the solution to
\begin{equation}\label{eqn:estimator-theta-k}
	\begin{aligned}
		&\frac{1}{n}\sum_{i=1}^n t(V_i;\theta)\Big[R_i\big\{Y_i-E(Y\mid V_i;\widehat\gamma^k,\theta)\big\}
		\\&~~~~~~~~~~~~~~~~~~~~~~~~~ 
		+(1-R_i)\big\{E(Y\mid V_i;\widehat\gamma^k,\theta)-\mu(\theta_1^\T W_i+\theta_2^\T V_i) \big\}\Big]=0.
	\end{aligned}
\end{equation}
It is easy to verify that if the $k$th imputation model is correctly specified, then $\widehat\theta^k$ is a consistent estimator of $\theta_0$.
Let $\{W_i^d(\widehat\gamma^k):d=1,\ldots,D\}$ denote $D$ random draws from $f(W\mid V_i;\widehat\gamma^k)$, and define  $g_{i}(\widehat\gamma^k,\widehat\theta^k)=D^{-1}\sum_{d=1}^D s\{W_i^d(\widehat\gamma^k),V_i;\widehat\theta^k\}$ for $i=1,\ldots,n$, where $s(W,V;\theta)$ is defined below~\eqref{eqn:estimatingeq}. Because  $ E(Y\mid V)= E\{E(Y\mid W,V)\mid V\}$, 
the quantity $g_{i}(\widehat\gamma^k,\widehat\theta^k)$ can be seen as an estimate of $E(Y\mid V_i;\widehat\gamma^k,\widehat\theta^k)t(V_i;\widehat\theta^k)$ by averaging over the $D$ random draws taken from $f(W\mid V_i;\widehat\gamma^k)$.

Our procedure consists of three steps. In the first step, we obtain the calibration weights $\omega_{1i}$ 
for subjects in the primary sample $\{i:i=1,\ldots,m\}$ by imposing the following  constraints:
\begin{equation}\label{eqn:omega1-constraints}
	\begin{aligned}
		&\omega_{1i}\geq 0,\quad \sum_{i=1}^m\omega_{1i}=1,\quad
		\sum_{i=1}^m \omega_{1i} \pi_i^j(\widehat\eta^j)=\widehat\tau^j\qquad (j=1,\ldots,J),\\
		&~~~~~~~~~~~~~~~~~~\sum_{i=1}^m \omega_{1i}g_{i}(\widehat\gamma^k,\widehat\theta^k)=\widehat\psi^k\qquad (k=1,\ldots,K),
	\end{aligned}
\end{equation}
where
$\widehat\tau^j=n^{-1}\sum_{i=1}^n \pi_i^j(\widehat\eta^j)$
and $\widehat \psi^k = n^{-1}\sum_{i=1}^n g_{i}(\widehat\gamma^k,\widehat\theta^k)$. The rationale behind these constraints is as follows. Note that for any function $b(V)$ with finite expectation, we have 
\begin{equation}\label{eqn:population-omega1-constraints}
	\begin{aligned}	
			E\big[\omega(V)\big\{b(V)-E(b(V))\big\}\mid R=1 \big]=0,
	\end{aligned}
\end{equation}
where $\omega(V)=1/\pi(V)$. We then take $b(V)$ to be $\pi^j(\widehat\eta^j)$ and $E(Y\mid V;\widehat\gamma^k,\widehat\theta^k)t(V;\widehat\theta^k)$
to obtain the above constraints.
We choose $\widehat\omega_{1i}$'s that maximize $\prod_{i=1}^m \omega_{1i}$ subject to the constraints in~\eqref{eqn:omega1-constraints}. To give an explicit form for the estimates $\widehat\omega_{1i}$'s, we write
\begin{align*}
	&~~~~~~~\widehat\eta = \big\{(\widehat\eta^1)^\T,\ldots,(\widehat\eta^J)^\T\big\}^\T,~~\widehat\gamma = \big\{(\widehat\gamma^1)^\T,\ldots,(\widehat\gamma^K)^\T\big\}^\T,~~\widehat\theta =\big \{(\widehat\theta^1)^\T,\ldots,(\widehat\theta^K)\big \}^\T,\\
	&\widehat h_{i}(\widehat\eta,\widehat\gamma,\widehat\theta) = \big[\pi_i^1(\widehat\eta^1)-\widehat\tau^1,\ldots,\widehat\pi_i^J(\widehat\eta^J)-\widehat\tau^J,
	\{g_{i}(\widehat\gamma^1,\widehat\theta^1)-\widehat\psi^1\}^\T,\ldots,
	\{ g_{i}(\widehat\gamma^K,\widehat\theta^K)-\widehat\psi^K\}^\T \big]^\T.
\end{align*}
Then by Lagrange multipliers method, we have
\begin{equation*}\label{eqn:expression-for-omega1}
	\widehat\omega_{1i} = \frac{1}{m}\frac{1}{1+\widehat\rho^\T\widehat h_{i}(\widehat\eta,\widehat\gamma,\widehat\theta)}\bigg/\bigg\{\frac{1}{m}\sum_{i=1}^m\frac{1}{1+\widehat\rho^\T\widehat h_{i}(\widehat\eta,\widehat\gamma,\widehat\theta)} \bigg\}\qquad (i=1,\ldots,m),
\end{equation*}
where $\widehat\rho = (\widehat\rho_1,\ldots,\widehat\rho_{J+pK})^\T$ is  a $(J+pK)$-dimensional vector satisfying the equation
\begin{equation*}\label{eqn:lagrange-multiplier-omega1}
	\sum_{i=1}^m\frac{\widehat h_{i}(\widehat\eta,\widehat\gamma,\widehat\theta)}{1+\widehat\rho^\T \widehat h_{i}(\widehat\eta,\widehat\gamma,\widehat\theta)}=0.
\end{equation*}
To guarantee nonnegativity of $\widehat\omega_{1i}$, we further impose the condition
$1+\widehat\rho^\T \widehat h_{i}(\widehat\eta,\widehat\gamma,\widehat\theta)> 0$ for $i=1,\ldots,m$. Then under this condition, we can obtain $\widehat\rho$ by minimizing a convex function $F(\rho)=n^{-1}\sum_{i=1}^n R_i\log\{1+\rho^\T\widehat h_{i}(\widehat\eta,\widehat\gamma,\widehat\theta)\}$.

In the second step, we obtain the weights $\omega_{0i}$'s for subjects in the auxiliary data $\{i:i=m+1,\ldots,n\}$ similarly by the following constraints:
\begin{equation}\label{eqn:omega0-constraints}
	\begin{aligned}
		&\omega_{0i}\geq 0,\quad \sum_{i=m+1}^n\omega_{0i}=1,\quad
		\sum_{i=m+1}^n \omega_{0i} \pi_i^j(\widehat\eta^j)=\widehat\tau^j\qquad (j=1,\ldots,J),\\
		&~~~~~~~~~~~~~~~~~~\sum_{i=m+1}^n \omega_{0i}g_{i}(\widehat\gamma^k,\widehat\theta^k)=\widehat\psi^k\qquad (k=1,\ldots,K),
	\end{aligned}
\end{equation}
Then the estimates $\widehat\omega_{0i}$'s that maximize $\prod_{i=m+1}^n\omega_{0i}$ under constraints in~\eqref{eqn:omega0-constraints}  are given by
\begin{equation*}\label{eqn:expression-for-omega0}
	\widehat\omega_{0i}=\frac{1}{n-m}\frac{1}{1+\widehat\alpha^\T\widehat h_{i}(\widehat\eta,\widehat\gamma,\widehat\theta)}\bigg/\bigg\{\frac{1}{n-m}\sum_{i=m+1}^n\frac{1}{1+\widehat\alpha^\T\widehat h_{i}(\widehat\eta,\widehat\gamma,\widehat\theta)} \bigg\}~~~ (i=m+1,\ldots,n),
\end{equation*}
where $\widehat\alpha = (\widehat\alpha_1,\ldots,\widehat\alpha_{J+pK})^\T$ is the $(J+pK)$-dimensional Lagrange multipliers solving
\begin{equation*}\label{eqn:lagrange-multiplier-omega0}
	\begin{aligned}
		\sum_{i=m+1}^n\frac{\widehat h_{i}(\widehat\eta,\widehat\gamma,\widehat\theta)}{1+\widehat\alpha^\T\widehat h_{i}(\widehat\eta,\widehat\gamma,\widehat\theta)}=0, ~~~ \text{and}~~~ 1+\widehat\alpha^\T\widehat h_{i}(\widehat\eta,\widehat\gamma,\widehat\theta)> 0~~~(i=m+1,\ldots,n).
	\end{aligned}
\end{equation*}

Finally, the proposed estimator of $\theta_0$ based on the calibration weights, denoted by $\widehat\theta_{\MR}$, is the solution to
\begin{equation*}\label{eqn:mr-estimator}
	\sum_{i=1}^m \widehat\omega_{1i} Y_it(V_i;\theta)-\sum_{i=m+1}^n \widehat\omega_{0i} s( W_i,V_i;\theta)=0.
\end{equation*}
Compared to existing doubly robust estimators that weight each subject in the primary data by $1/\{n\widehat\pi(V)\}$ and  subject in the auxiliary data by $1/[n\{1-\widehat\pi(V)\}]$, the proposed calibration estimator $\widehat\theta_{\MR}$ use weights $\widehat\omega_{1i}$ and $\widehat\omega_{0i}$, respectively. Those doubly robust estimators may be sensitive to near-zero or near-one values of $\widehat\pi(V)$, which can yield extremely large weights that may make the numerical performance be quite unstable. In our procedure, we obtain the calibration weights through maximization of $\prod_{i=1}^m \omega_{1i}$ and $\prod_{i=m+1}^n\omega_{0i}$ that satisfy~\eqref{eqn:omega1-constraints} and~\eqref{eqn:omega0-constraints}, respectively. These two objective functions increase if their separate weights are more evenly distributed, because the weights are restricted to be
nonnegative and sum-to-one. Thus, our procedure will not be affected dramatically  by extreme values of propensity score and 
should lead to more stable performance. This property inherits from the empirical likelihood method, which has been successfully applied to address missing data problems \citep{qin2007empirical,qin2009empirical,cao2009improving,han2014multiply,han2019general}.

\section{Asymptotic results}\label{sec:asymptotic}

In this section, we show that the estimator $\widehat\theta_{\MR}$ is consistent when one of the propensity scores or one of the imputation models is correctly specified. We also establish its asymptotic normality property and discuss the estimation efficiency. We introduce more notations that will be used later.    
Let $\eta_*^j$, $\gamma_*^k$, $\theta_*^k$,  $\tau_*^j$ and $\psi_*^k$ denote the probability limits of $\widehat\eta^j$, $\widehat\gamma^k$, $\widehat\theta^k$, $\widehat\tau^j$ and $\widehat \psi^k$ respectively, as $n\rightarrow\infty$, where $j=1,\ldots,J$ and $k=1,\ldots,K$. It is clear that  $\tau_*^j=E\{\pi^j(\eta_*^j)\}$ and $\psi_*^k=E[s\{W^d(\gamma_*^k),V;\theta_*^k\}]$. Write $\eta_*^\T=\{(\eta_*^1)^\T,\ldots,(\eta_*^J)^\T\}$, $\gamma_*^\T=\{(\gamma_*^1)^\T,\ldots,(\gamma_*^K)^\T\}$, $\theta_*^\T=\{(\theta_*^1)^\T,\ldots,(\theta_*^K)^\T\}$, and
$	h_i(\eta_*,\gamma_*,\theta_*)^\T=[\pi_i^1(\eta_*^1)-\tau_*^1,\ldots,\pi_i^J(\eta_*^J)-\tau_*^J,\{g_i(\gamma_*^1,\theta_*^1)-\psi_*^1\}^\T,
\ldots,\{g_i(\gamma_*^K,\theta_*^K)-\psi_*^K\}^\T]$ for $i=1,\ldots,n$.

We first consider the case where one of the  models in $\Cone$ is correctly specified. Without loss of generality, let $\pi^1(\eta^1)$ be the correct model in the sense that there exists some value $\eta_0^1$ such that $\pi^1(\eta_0^1)=\pi(V)$.
To establish the consistency property of $ \widehat\theta_{\MR}$,
we  build the connection between $\widehat\omega_{1i}$ and another version of the empirical likelihood estimator obtained from the primary sample $\{i:i=1,\ldots,m\}$. Let $p_i$ denote the conditional empirical probability mass on $(Y_i,W_i,V_i)$ given $R_i=1$ for $i=1,\ldots,m$. According to~\eqref{eqn:population-omega1-constraints} and the fact that $\omega(V)=1/\pi^1(\eta_0^1)$, the estimator of $p_i$ is given by the following constrained optimization:
\begin{equation*}\label{eqn:p1-constraints}
	\begin{aligned}
		&\max_{p_1,\ldots,p_m}\prod_{i=1}^mp_i~~ \text{subject to}~~ p_i\geq 0,~
		\sum_{i=1}^m p_i \{\pi_i^j(\widehat\eta^j)-\widehat\tau^j\}/\pi_i^1(\widehat\eta^1)=0\quad (j=1,\ldots,J),\\
		&~~~~~~~~~~~~~~~~~~~~~~~~~~~~~~~~~~~~~~~~\sum_{i=1}^m p_i\{g_{i}(\widehat\gamma^k,\widehat\theta^k)-\widehat\psi^k\}/\pi_i^1(\widehat\eta^1)=0\quad (k=1,\ldots,K).
	\end{aligned}
\end{equation*}
Using the Lagrange multipliers again yields that
\begin{equation*}\label{expression-for-p1}
	\widehat p_i = \frac{1}{m}\frac{1}{1+\widehat\lambda^\T\widehat h_{i}(\widehat\eta,\widehat\gamma,\widehat\theta)/\pi_i^1(\widehat\eta^1)}\quad (i=1,\ldots,m),
\end{equation*}
where $\widehat\lambda = (\widehat\lambda_1,\ldots,\widehat\lambda_{J+pK})^\T$ is the $(J+pK)$-dimensional Lagrange multiplier satisfying
\begin{equation*}
	\sum_{i=1}^m\frac{\widehat h_{i}(\widehat\eta,\widehat\gamma,\widehat\theta)/\pi_i^1(\widehat\eta^1)}{1+\widehat\lambda^\T \widehat h_{i}(\widehat\eta,\widehat\gamma,\widehat\theta)/\pi_i^1(\widehat\eta^1)}=0.
\end{equation*}
With some simple algebra given in the supplementary material, one can show that
\begin{equation*}
	\widehat\omega_{1i} = \frac{1}{m}\frac{\widehat\tau^1/\pi_i^1(\widehat\eta^1)}{1+\widehat\lambda^\T \widehat h_{i}(\widehat\eta,\widehat\gamma,\widehat\theta)/\pi_i^1(\widehat\eta^1)}.
\end{equation*}
This equation links the calibration weight $\widehat\omega_{1i}$ with the Lagrange multiplier $\widehat\lambda$ of the empirical likelihood estimator $\widehat p_i$.
Based on the empirical likelihood theory, one can show that $\widehat\lambda=o_p(1)$. In addition, since $\pi^1(\eta^1)$ is a correctly-specified propensity score model, $\widehat\tau^1-m/n=o_p(1)$.  We then conclude from the above equation that $\widehat\omega_{1i}=1/\{n\pi_i^1(\widehat\eta^1)\}+o_p(1)$. 
Similarly, $\widehat\omega_{0i}=1/[n\{1-\pi_i^1(\widehat\eta^1)\}]+o_p(1)$. We would like to point out that these results do not contradict with our previous discussions that the proposed calibration weights are less sensitive to extreme propensity score values, because the discussions therein mainly emphasize the finite sample performance of those estimators. In an asymptotic way or when the sample size goes to infinity, the calibration weights are equivalent to the inverse propensity score weights. Based on the above intermediate results,  we obtain the following theorem.

\begin{theorem}\label{thm:consistency-propensity}
	When one of the models in $\Cone$ is correctly specified,  $\widehat\theta_{\MR}$ is a consistent estimator of $\theta_0$  as $n\rightarrow \infty$.
\end{theorem}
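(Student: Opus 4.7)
The plan is to build directly on the identities displayed immediately before the theorem, which already reduce the calibration weights to standard inverse propensity weights up to $o_p(1)$ error. Concretely, when $\pi^1(\eta^1)$ is correctly specified, I would first rigorously justify the two intermediate claims stated in the paragraph preceding the theorem: (i) $\widehat\lambda=o_p(1)$, which follows from a standard empirical likelihood argument after verifying that $E[\widehat h_i(\eta_*,\gamma_*,\theta_*)/\pi_i^1(\eta_0^1)\mid R_i=1]=o(1)$ via the population identity (3), together with a bound on $\max_i\|\widehat h_i\|$; and (ii) $\widehat\tau^1-m/n=o_p(1)$, which is a direct law-of-large-numbers calculation using $E\{\pi^1(\eta_0^1)\}=E\{\pi(V)\}=\pr(R=1)$ and the consistency of $\widehat\eta^1$. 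Substituting (i) and (ii) into the closed-form expression for $\widehat\omega_{1i}$ yields
\begin{equation*}
\widehat\omega_{1i}=\frac{1}{n\pi_i^1(\widehat\eta^1)}+o_p(n^{-1}),\qquad \widehat\omega_{0i}=\frac{1}{n\{1-\pi_i^1(\widehat\eta^1)\}}+o_p(n^{-1}),
\end{equation*}
uniformly in $i$, the second identity following by a symmetric argument applied to the auxiliary-sample optimization in (5).

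With these approximations, the defining equation for $\widehat\theta_{\MR}$ becomes, up to terms of order $o_p(1)$ uniform in $\theta$,
\begin{equation*}
U_n(\theta)\;:=\;\frac{1}{n}\sum_{i=1}^n\frac{R_i\,Y_i\,t(V_i;\theta)}{\pi_i^1(\widehat\eta^1)}-\frac{1}{n}\sum_{i=1}^n\frac{(1-R_i)\,s(W_i,V_i;\theta)}{1-\pi_i^1(\widehat\eta^1)}=0.
\end{equation*}
Under standard regularity (continuity of $\pi^1$ in $\eta^1$, $\pi^1$ bounded away from $0$ and $1$ by Assumption~\ref{ass:positivity}, and smoothness of $t$ and $s$ in $\theta$), $U_n(\theta)$ converges in probability, uniformly on a neighborhood of $\theta_0$, to $U(\theta):=E\{RYt(V;\theta)/\pi(V)\}-E\{(1-R)s(W,V;\theta)/(1-\pi(V))\}$. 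By iterated expectations and Assumption~\ref{ass:ignorability}, each term reduces: $E\{RYt(V;\theta)/\pi(V)\}=E\{Yt(V;\theta)\}$ and $E\{(1-R)s(W,V;\theta)/(1-\pi(V))\}=E\{s(W,V;\theta)\}=E\{\mu(\theta_1^\T W+\theta_2^\T V)t(V;\theta)\}$, so $U(\theta)=E[\{Y-\mu(\theta_1^\T W+\theta_2^\T V)\}t(V;\theta)]$, which vanishes at $\theta_0$ by the moment condition~\eqref{eqn:estimatingeq}. A standard $Z$-estimator argument (invoking identifiability of $\theta_0$ from Section~\ref{sec:identification} and nonsingularity of $\Gamma$) then gives $\widehat\theta_{\MR}\xrightarrow{p}\theta_0$.

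The step I expect to be the main obstacle is establishing the uniform-in-$\theta$ control needed to push $U_n(\theta)\to U(\theta)$ through to the root. The wrinkle specific to the data-fusion setup is that $\widehat h_i$ depends on the estimated nuisance quantities $\widehat\eta$, $\widehat\gamma$, $\widehat\theta$ and on the Monte Carlo draws defining $g_i(\widehat\gamma^k,\widehat\theta^k)$, so the $o_p(1)$ bound on $\widehat\lambda$ and the uniformity of the weight approximation require an empirical-process (Donsker or stochastic-equicontinuity) argument rather than a pointwise LLN; this is precisely the ``additional empirical process theories'' the introduction alludes to. Once uniformity is in hand, the reduction to the IPW form and the subsequent consistency argument are routine.
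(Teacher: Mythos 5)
Your proposal is correct and follows essentially the same route as the paper: it first establishes the weight approximations $\widehat\omega_{1i}=1/\{n\pi_i^1(\widehat\eta^1)\}+o_p(n^{-1})$ and $\widehat\omega_{0i}=1/[n\{1-\pi_i^1(\widehat\eta^1)\}]+o_p(n^{-1})$ (the content of the paper's Lemma~\ref{lem:weights}) and then reduces the calibration estimating equation to its inverse-probability-weighted form, whose population version vanishes at $\theta_0$ by Assumption~\ref{ass:ignorability} and the moment condition~\eqref{eqn:estimatingeq}. If anything, you are more explicit than the paper about the final step --- the paper only verifies that the estimating function evaluated at $\theta_0$ is $o_p(1)$ and then asserts consistency, whereas you correctly note that a uniform-in-$\theta$ convergence and an identifiability/nonsingularity argument are needed to pass from that to consistency of the root.
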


Next, we consider the case where one of the models in $\Ctwo$ is correct.
Without loss of generality, we assume that $a^1(\gamma^1)$ is correctly specified, i.e., $a^1(\gamma_0^1)=f(W\mid V)$ for some $\gamma_0^1$. Then we have $\gamma_0^1=\gamma_*^1$; that is, $\widehat\gamma^1$ is a consistent estimator of $\gamma_0^1$. This implies that the estimator $\widehat\theta^1$ obtained from~\eqref{eqn:estimator-theta-k} is a consistent estimator of $\theta_0$. By utilizing constraints in~\eqref{eqn:omega1-constraints} and \eqref{eqn:omega0-constraints}, we can also obtain the consistency of $\widehat\theta_{\MR}$ in this case. 

\begin{theorem}\label{thm:consistency-misscov}
	When one of the models in $\Ctwo$ is correctly specified, $\widehat\theta_{\MR}$ is  a consistent estimator of $\theta_0$ as $n\rightarrow \infty$.
\end{theorem}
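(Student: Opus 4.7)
The plan is to substitute $\theta_0$ into the estimating equation $U_n(\theta) := \sum_{i=1}^m \widehat\omega_{1i} Y_i t(V_i;\theta) - \sum_{i=m+1}^n \widehat\omega_{0i} s(W_i,V_i;\theta)$ defining $\widehat\theta_{\MR}$, show that $U_n(\theta_0) = o_p(1)$, and then invoke the standard M-estimation argument (uniform convergence of $U_n(\theta)$ together with identifiability of $\theta_0$ as the unique root of the limiting estimating equation) to conclude $\widehat\theta_{\MR} \to \theta_0$ in probability.

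The key algebraic maneuver exploits the $k=1$ calibration constraints in \eqref{eqn:omega1-constraints} and \eqref{eqn:omega0-constraints}: by construction,
$$
\sum_{i=1}^m \widehat\omega_{1i}\, g_i(\widehat\gamma^1,\widehat\theta^1) \;=\; \widehat\psi^1 \;=\; \sum_{i=m+1}^n \widehat\omega_{0i}\, g_i(\widehat\gamma^1,\widehat\theta^1),
$$
so these two quantities cancel when added and subtracted inside $U_n(\theta_0)$, giving
$$
U_n(\theta_0) = \sum_{i=1}^m \widehat\omega_{1i}\bigl\{Y_i t(V_i;\theta_0) - g_i(\widehat\gamma^1,\widehat\theta^1)\bigr\} + \sum_{i=m+1}^n \widehat\omega_{0i}\bigl\{g_i(\widehat\gamma^1,\widehat\theta^1) - s(W_i,V_i;\theta_0)\bigr\}.
$$
Because $a^1(\gamma^1)$ is correctly specified, $\widehat\gamma^1 \to \gamma_0^1$ with $a^1(\gamma_0^1) = f(W\mid V)$ and $\widehat\theta^1 \to \theta_0$ by the consistency of \eqref{eqn:estimator-theta-k}. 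Hence $g_i(\widehat\gamma^1,\widehat\theta^1)$ is, up to $o_p(1)$ error, a Monte Carlo estimate of $E\{s(W,V_i;\theta_0)\mid V_i\}$, which by the tower property and $E(Y\mid W,V)=\mu(\theta_{10}^\T W + \theta_{20}^\T V)$ equals $E(Y\mid V_i)t(V_i;\theta_0)$.

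Both bracketed summands therefore have (approximately) conditional mean zero given $V_i$, where Assumption~\ref{ass:ignorability} is used to remove the conditioning on $R_i$. Combined with the empirical likelihood bound $\max_i\widehat\omega_{1i},\max_i\widehat\omega_{0i}=O_p(1/n)$ (which follows from $\widehat\rho,\widehat\alpha=o_p(1)$, as in the discussion above Theorem~\ref{thm:consistency-propensity}, together with Assumption~\ref{ass:positivity}), a Chebyshev variance calculation yields that each weighted sum is $o_p(1)$, and hence $U_n(\theta_0)=o_p(1)$.

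The main obstacle is the coupling between the weights $\widehat\omega_{1i},\widehat\omega_{0i}$ and the residuals, since both depend on the estimated nuisance parameters $\widehat\eta,\widehat\gamma,\widehat\theta$ and on the Monte Carlo draws $W_i^d(\widehat\gamma^1)$. A clean treatment requires an empirical-process step: a Donsker class argument on $\{s(\cdot,\cdot;\theta):\theta\in\Theta\}$ in a neighborhood of $\theta_0$, uniform convergence in $(\gamma,\theta)$ of the Monte Carlo averages $g_i$, and a stochastic equicontinuity bound that lets us replace $(\widehat\gamma^1,\widehat\theta^1)$ by $(\gamma_0^1,\theta_0)$ inside the residuals without disturbing the rates. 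Once $U_n(\theta_0)=o_p(1)$ is established and the limiting estimating equation is checked to have $\theta_0$ as its unique root (inherited from the identifiability discussion in Section~\ref{sec:identification} via full rank of $\Gamma$), consistency of $\widehat\theta_{\MR}$ follows by standard M-estimation arguments.
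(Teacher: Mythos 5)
Your proposal follows essentially the same route as the paper's proof: both pivot on the $k=1$ calibration constraints in \eqref{eqn:omega1-constraints} and \eqref{eqn:omega0-constraints} to reduce the estimating function at $\theta_0$ to weighted sums of residuals $Y_it(V_i;\theta_0)-g_i(\widehat\gamma^1,\widehat\theta^1)$ and $g_i(\widehat\gamma^1,\widehat\theta^1)-s(W_i,V_i;\theta_0)$, use correctness of $a^1(\gamma^1)$ together with consistency of $\widehat\theta^1$ from \eqref{eqn:estimator-theta-k} and Assumption~\ref{ass:ignorability} to give these residuals conditional mean zero given $V_i$, and then decouple the random weights from the residuals by an empirical-process/Taylor step before applying the law of large numbers.

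One justification you give is incorrect, although the step it supports survives. You claim $\widehat\rho,\widehat\alpha=o_p(1)$ ``as in the discussion above Theorem~\ref{thm:consistency-propensity}.'' That discussion establishes $\widehat\lambda=o_p(1)$ \emph{under a correct propensity score model}, and even then the relation $\widehat\rho_1=(\widehat\lambda_1+1)/\widehat\tau^1$ shows $\widehat\rho_1\rightarrow 1/\tau_*^1\neq 0$; in the setting of this theorem, where only an imputation model is assumed correct, there is no reason for either multiplier to vanish --- it merely converges to some limit $\rho_*$ (resp.\ $\alpha_*$). The paper's proof accordingly replaces $\widehat\rho$ by its probability limit $\rho_*$ and carries the factor $1/\{1+\rho_*^\T h_i(\eta_*,\gamma_*,\theta_*)\}$ into the limiting expectation, which is still zero because $h_i$ depends only on $V_i$ (and Monte Carlo draws independent of $Y_i$ given $V_i$) while the residual has conditional mean zero given $V_i$. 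Your uniform bound $\max_i\widehat\omega_{1i}=O_p(1/n)$ remains valid provided $1+\rho_*^\T h$ is bounded away from zero, so the argument goes through once you drop the $o_p(1)$ claim and route the decoupling step through $\rho_*$ as the paper does.
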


Different from the consistency property of $\widehat\theta_{\MR}$, the derivation of the asymptotic distribution is asymmetric. In other words, the asymptotic normality property depends on
which of the $J+K$ candidate models is correctly specified, and the asymptotic variance is different when one of the propensity score models or one of the imputation models is correct. The usual strategy in traditional missing data analysis for semiparametric theory is to assume the propensity score model is correctly specified  \citep{
	robins1995semiparametric,tsiatis2006semiparametric}. We follow this way in the current data fusion setting.
Without loss of generality, we assume $\pi^1(\eta^1)$ is a correctly specified model for $\pi(V)$, and let $\Psi(\eta^1)$ denote the score function of $\eta^1$, that is,
\[
\Psi(\eta^1)=\frac{R-\pi^1(\eta^1)}{\pi^1(\eta^1)\{1-\pi^1(\eta^1)\}}\frac{\partial \pi^1(\eta^1)}{\partial\eta^1}.
\]
We simply write $\Psi=\Psi(\eta_0^1)$ and define the following matrices:
\begin{equation*}
	\begin{aligned}
		F=&E\bigg[\frac{Yt(V;\theta_0)-E\{Yt(V;\theta_0)\}}{\pi^1(\eta_0^1)}\big\{h(\eta_*,\gamma_*,\theta_*) \big\}^\T \bigg],\quad~~~~~ H=E\bigg\{ \frac{h(\eta_*,\gamma_*,\theta_*)^{\otimes2}}{\pi^1(\eta_0^1)}\bigg\},\\
		G=&E\bigg[\frac{s(W,V;\theta_0)-E\{s(W,V;\theta_0)\}}{1-\pi^1(\eta_0^1)}\big\{ h(\eta_*,\gamma_*,\theta_*)\big\}^\T \bigg],\quad T=E\bigg\{\frac{h(\eta_*,\gamma_*,\theta_*)^{\otimes 2}}{1-\pi^1(\eta_0^1)} \bigg\},
	\end{aligned}
\end{equation*}
where for any matrix $C$, $C^{\otimes 2}=CC^\T$. We further define
\begin{equation*}
	\begin{aligned}
		Q(\eta^1)=&\frac{R}{\pi^1(\eta^1)}\big[Yt(V;\theta_0)-E\{Yt(V;\theta_0)\}\big]-\frac{R-\pi^1(\eta^1)}{\pi^1(\eta^1)}FH^{-1}h(\eta_*,\gamma_*,\theta_*)\\
		&~-\frac{1-R}{1-\pi^1(\eta^1)}\big[s(W,V;\theta_0)-E\{s(W,V;\theta_0)\}\big]+\frac{\pi^1(\eta^1)-R}{1-\pi^1(\eta^1)}GT^{-1}h(\eta_*,\gamma_*,\theta_*),
	\end{aligned}
\end{equation*}
and write $Q=Q(\eta_0^1)$. The asymptotic distribution is given by the following theorem.

\begin{theorem}\label{thm:pi-correct}
	When $\Cone$ contains a correctly specified model for $\pi(V)$, $n^{1/2}(\widehat\theta_{\MR}-\theta_0)$ has an asymptotic normal distribution with mean zero and variance $\var(L)$, where
	$	L=\Gamma^{-1}[Q-E(Q\Psi^\T)\{E(\Psi^{\otimes 2}) \}^{-1}\Psi ]$.
\end{theorem}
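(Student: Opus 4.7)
The plan is to derive an asymptotically linear expansion $n^{1/2}(\widehat\theta_{\MR}-\theta_0)=n^{-1/2}\sum_{i=1}^n L_i+o_p(1)$ and then invoke the multivariate central limit theorem. The argument has three essentially separable pieces: an expansion of the two Lagrange multipliers, a linearisation of the estimating equation, and an adjustment for the first-stage maximum likelihood estimator $\widehat\eta^1$.

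First I would expand the Lagrange multipliers. Under Assumptions~\ref{ass:ignorability}--\ref{ass:positivity} and appropriate moment conditions, the convexity of $F(\rho)$ together with standard empirical-likelihood arguments \citep{owen2001empirical,qin1994empirical} yields $\widehat\rho,\widehat\alpha=O_p(n^{-1/2})$ with
\[
\widehat\rho=\Bigl\{\tfrac{1}{m}\sum_{i=1}^m\widehat h_i\widehat h_i^\T\Bigr\}^{-1}\tfrac{1}{m}\sum_{i=1}^m\widehat h_i+o_p(n^{-1/2}),
\]
and the analogous expression for $\widehat\alpha$ on $\{i:R_i=0\}$. A conditional-expectation calculation using Assumption~\ref{ass:ignorability} together with $\pi^1(\eta_0^1)=\pi(V)$ shows that the cross-product matrices on $\{R_i=1\}$ and $\{R_i=0\}$ converge to $H$ and $T$ respectively (up to the mixture weights $P(R=1)$, $P(R=0)$), while the analogous sample cross-products of $Y_it(V_i;\theta_0)$ and $s(W_i,V_i;\theta_0)$ with $\widehat h_i$ converge to $F$ and $G$.

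Second, applying $(1+x)^{-1}=1-x+O(x^2)$ to the explicit formulas gives
\[
\widehat\omega_{1i}=\tfrac{1}{m}\{1-\widehat\rho^\T\widehat h_i\}+o_p(n^{-1}),\qquad \widehat\omega_{0i}=\tfrac{1}{n-m}\{1-\widehat\alpha^\T\widehat h_i\}+o_p(n^{-1}).
\]
Substituting into the defining equation for $\widehat\theta_{\MR}$, Taylor-expanding in $\theta$ around $\theta_0$ with Jacobian $\Gamma$, and collecting terms produces
\[
\widehat\theta_{\MR}-\theta_0=\Gamma^{-1}\cdot\tfrac{1}{n}\sum_{i=1}^n Q_i(\widehat\eta^1)+o_p(n^{-1/2}),
\]
where the IPW-type pieces come from the equivalence $\widehat\omega_{1i}\approx\{n\pi_i^1(\widehat\eta^1)\}^{-1}$ already noted before Theorem~\ref{thm:consistency-propensity}, and the $FH^{-1}h$, $GT^{-1}h$ projection terms emerge directly from the expansions of $\widehat\rho$ and $\widehat\alpha$.

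Finally I would handle the first-stage estimator. Since $\widehat\eta^1$ is MLE with score $\Psi$,
\[
\widehat\eta^1-\eta_0^1=\{E(\Psi^{\otimes 2})\}^{-1}\tfrac{1}{n}\sum_{i=1}^n\Psi_i+o_p(n^{-1/2}),
\]
and because $E\{Q(\eta_0^1)\}=0$ (checked term by term using Assumption~\ref{ass:ignorability} and $\pi^1(\eta_0^1)=\pi(V)$), differentiating the identity $E\{Q(\eta^1)\}\big|_{\eta_0^1}=0$ yields $E\{\partial Q/\partial\eta^1\}=-E(Q\Psi^\T)$. Expanding $Q_i(\widehat\eta^1)$ around $\eta_0^1$ and combining with the previous step gives
\[
n^{1/2}(\widehat\theta_{\MR}-\theta_0)=\Gamma^{-1}\cdot n^{-1/2}\sum_{i=1}^n\bigl\{Q_i-E(Q\Psi^\T)[E(\Psi^{\otimes 2})]^{-1}\Psi_i\bigr\}+o_p(1),
\]
so the CLT delivers the asymptotic normal distribution with variance $\var(L)$.

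The main obstacle is controlling the contributions in $\widehat h_i$ associated with possibly inconsistent $\widehat\gamma^k,\widehat\theta^k$ and with the Monte Carlo draws $W_i^d(\widehat\gamma^k)$. Although these enter the expansion only through the pre-factor $\widehat\rho=O_p(n^{-1/2})$ (and similarly $\widehat\alpha$), passing from the fitted quantities to their probability limits $(\gamma_*^k,\theta_*^k)$ requires uniform control, which I would obtain via a stochastic-equicontinuity (Donsker-class) argument on the class $\{g_i(\gamma,\theta)\}$ in a shrinking neighbourhood of $(\gamma_*^k,\theta_*^k)$ together with an $L_2$ bound on the Monte Carlo error that is $O_p(D^{-1/2})$ uniformly in $i$; this is the empirical-process ingredient the authors flagged in the introduction, and the rest of the argument follows standard lines for two-step empirical-likelihood estimators.
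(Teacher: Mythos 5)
Your overall skeleton (asymptotic linearity, multiplier expansion, linearization in $\theta$ with Jacobian $\Gamma$, sandwich correction for the first-stage MLE $\widehat\eta^1$, and a Donsker argument for the nuisance estimators $\widehat\gamma,\widehat\theta$) matches the paper's strategy, but your first step contains a genuine error that the rest of the argument inherits. You claim $\widehat\rho=O_p(n^{-1/2})$ with $\widehat\rho=\{m^{-1}\sum_{i\le m}\widehat h_i\widehat h_i^\T\}^{-1}m^{-1}\sum_{i\le m}\widehat h_i+o_p(n^{-1/2})$, and hence $\widehat\omega_{1i}=m^{-1}\{1-\widehat\rho^\T\widehat h_i\}+o_p(n^{-1})$. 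This fails because the calibration functions $\widehat h_i$ are centered at \emph{full-sample} means ($\widehat\tau^j$, $\widehat\psi^k$), so their conditional mean given $R=1$ is not zero: e.g.\ $E[\pi(V)-E\{\pi(V)\}\mid R=1]=\var\{\pi(V)\}/E\{\pi(V)\}>0$ unless $\pi$ is constant. The whole point of the calibration constraints is to undo this selection bias, so $\widehat\rho$ converges to a nonzero limit $\rho_*$, the weights are \emph{not} approximately uniform $1/m$, and the second-order expansion around $\rho=0$ is invalid. Your step 3 even contradicts step 2: $\widehat\omega_{1i}\approx 1/\{n\pi_i^1(\widehat\eta^1)\}$ cannot hold simultaneously with $\widehat\omega_{1i}=1/m+O_p(n^{-3/2})$. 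Relatedly, $m^{-1}\sum_{i\le m}\widehat h_i\widehat h_i^\T$ converges to $E\{\pi(V)h^{\otimes2}\}/E\{\pi(V)\}$, which is not proportional to $H=E\{h^{\otimes2}/\pi^1(\eta_0^1)\}$, so the matrices $F,H,G,T$ do not ``emerge directly'' from your expansion.

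The paper's proof repairs exactly this point by a change of variables: it shows $\widehat\omega_{1i}=\widehat p_i\,\widehat\tau^1/\pi_i^1(\widehat\eta^1)$, where $\widehat p_i$ is the empirical-likelihood probability associated with the \emph{reweighted} moment functions $\widehat h_i/\pi_i^1(\widehat\eta^1)$ and multiplier $\widehat\lambda$ (and analogously $\widehat\omega_{0i}$ with $\widehat q_i$, $\widehat\delta$). Because $E\{Rh(\eta_*,\gamma_*,\theta_*)/\pi(V)\}=0$ when $\pi^1$ is correct, it is $\widehat\lambda$ and $\widehat\delta$ (not $\widehat\rho$, $\widehat\alpha$) that are $O_p(n^{-1/2})$; Lemma~\ref{lem:expression-lambda-delta} expands them, including the $A\{E(\Psi^{\otimes2})\}^{-1}\Psi$ and $B\{E(\Psi^{\otimes2})\}^{-1}\Psi$ contributions from $\widehat\eta^1$, and this is where $H^{-1}$ and $T^{-1}$ (with the $1/\pi$ and $1/(1-\pi)$ weightings) actually come from. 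The remainder of your argument — the generalized information equality giving $E(\partial Q/\partial\eta^1)=-E(Q\Psi^\T)$, stated somewhat informally since the identity $E_{\eta^1}\{Q(\eta^1)\}=0$ must hold along the submodel and not just at $\eta_0^1$, and the empirical-process control of $(\widehat\gamma,\widehat\theta)$ — is consistent with what the paper does, but it cannot be reached without first correcting the multiplier expansion.
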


It is clear to see that $L$ involves the residual of the projection of $Q$ on $\Psi$, so $\var(L)\leq \var(\Gamma^{-1}Q)$. Note that the latter is the variance of the inverse probability weighting estimator when $\pi(V)$ is known in a data fusion setting.			
This implies that the efficiency of $\widehat\theta_{\MR}$ can be improved by modeling $\pi(V)$ even when it is known. Such a counter-intuitive fact has been studied in traditional missing data problems; see, for example, \cite{robins1995semiparametric} and \cite{han2014multiply}. The following proposition presents the efficient influence function of $\theta_0$ defined through \eqref{eqn:estimatingeq} and provides the corresponding semiparametric efficiency bound. 

\begin{proposition}\label{prop:dr-efficiency}
	The efficient influence function of $\theta_0$ is given by
	\begin{align*}
		\Gamma^{-1}\bigg[\frac{R}{\pi(V)}\big\{Yt(V;\theta_0)-E(Yt(V;\theta_0)\mid V) \big\}
		-\frac{1-R}{1-\pi(V)}\big\{s(W,V;\theta_0)-E(s(W,V;\theta_0)\mid V)\big\}
		\bigg],
	\end{align*}
	and the semiparametric efficiency bound is equal to $\Gamma^{-1}\Omega(\Gamma^{-1})^\T$, where
	\[
	\Omega = E\bigg[\frac{1}{\pi(V)} \var\big\{Yt(V;\theta_0)\mid V\big\}+\frac{1}{1-\pi(V)}\var\big\{s(W,V;\theta_0)\mid V\big\}\bigg].
	\]
\end{proposition}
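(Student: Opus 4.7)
The plan is to establish Proposition~\ref{prop:dr-efficiency} by the standard projection argument for the coarsened-data model induced by Assumption~\ref{ass:ignorability}. Under that assumption the observed data $O = (V, R, RY, (1-R)W)$ has likelihood factorizing as $f(V)\cdot\pi(V)^R(1-\pi(V))^{1-R}\cdot f(Y\mid V)^R\cdot f(W\mid V)^{1-R}$, so the observed-data tangent space decomposes as $\mathcal{T} = \mathcal{T}_V \oplus \mathcal{T}_R \oplus \mathcal{T}_{Y\mid V} \oplus \mathcal{T}_{W\mid V}$, where, for instance, $\mathcal{T}_{Y\mid V} = \{R\,h(Y,V): E(h\mid V)=0\}$ and $\mathcal{T}_{W\mid V} = \{(1-R)\,k(W,V): E(k\mid V)=0\}$. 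For any regular submodel indexed by $\beta$ with score $\ell$, implicit differentiation of $U_\beta(\theta_0(\beta))=0$, where $U_\beta(\theta)=E_\beta\{Yt(V;\theta) - s(W,V;\theta)\}$, yields $\partial\theta_0/\partial\beta = -\Gamma^{-1}\partial U_\beta/\partial\beta$. I would evaluate $\partial U_\beta/\partial\beta$ against each of the four tangent-space components: it vanishes along $\mathcal{T}_V$ because the regression structure $E(Y\mid W,V) = \mu(\theta_{10}^\T W + \theta_{20}^\T V)$ forces $E(Yt\mid V) = E(s\mid V)$ almost surely; it vanishes along $\mathcal{T}_R$ because $U_\beta$ does not involve $\pi$; and it produces $E\{Yt\cdot h(Y,V)\}$ and $-E\{s\cdot k(W,V)\}$ along $\mathcal{T}_{Y\mid V}$ and $\mathcal{T}_{W\mid V}$, respectively.

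Next I would verify that the candidate $\phi^*$ in the statement represents this pathwise derivative. Its two summands sit in $\mathcal{T}_{Y\mid V}$ and $\mathcal{T}_{W\mid V}$ by direct inspection (each has conditional mean zero given $V$), so $\phi^*\in\mathcal{T}$ and is automatically orthogonal to $\mathcal{T}_V$ and $\mathcal{T}_R$, matching the two vanishing pieces above. The remaining verification against scores in $\mathcal{T}_{Y\mid V}$ and $\mathcal{T}_{W\mid V}$ follows by conditioning on $V$ and using $R\indep(Y,W)\mid V$, $E(R\mid V)=\pi(V)$, and $R(1-R)=0$ to eliminate the cross terms between the primary- and auxiliary-sample pieces of $\phi^*$, after which the inner products collapse to $\Gamma^{-1}E(Yt\cdot h)$ and $\Gamma^{-1}E(s\cdot k)$, which agree with the targeted derivatives up to the overall sign from the implicit-function step.

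Finally, writing $\phi^* = \Gamma^{-1}\psi^*$ for the bracketed quantity $\psi^*$, the efficiency bound equals $\Gamma^{-1}E\{\psi^*(\psi^*)^\T\}(\Gamma^{-1})^\T$. Expanding the outer product, the cross term between the two pieces of $\psi^*$ vanishes identically via $R(1-R)=0$, and the two remaining diagonal terms simplify by iterated expectations and Assumption~\ref{ass:ignorability} to $E\{\pi(V)^{-1}\var(Yt\mid V)\}$ and $E\{(1-\pi(V))^{-1}\var(s\mid V)\}$, which together constitute $\Omega$. The main obstacle, conceptually, is using correctly the structural identity $E(Y\mid V) = E\{\mu(\theta_{10}^\T W + \theta_{20}^\T V)\mid V\}$ implied by the regression model; without it the $\mathcal{T}_V$-component of $\partial\theta_0/\partial\beta$ would fail to vanish, and $\phi^*$ would acquire an extra additive piece $E(Yt\mid V) - E(s\mid V)$ that would spoil the clean AIPW-type form stated in the proposition.
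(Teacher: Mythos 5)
Your proposal is correct and follows essentially the same route as the paper: the same likelihood factorization of the observed data, the same tangent space, the same pathwise derivative of the moment condition $E\{Yt(V;\theta_0)-s(W,V;\theta_0)\}=0$, and verification that the candidate influence function lies in the tangent space and represents that derivative, followed by a direct variance calculation for $\Omega$. The only notable difference is that you make explicit the structural identity $E\{Yt(V;\theta_0)\mid V\}=E\{s(W,V;\theta_0)\mid V\}$ needed for the $f(V)$-score contribution to vanish, which the paper's proof uses only implicitly when it drops the $s_\beta(V)$ term from the differentiated moment condition.
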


\cite{chen2008semiparametric} and \cite{shu2020improved} presented similar efficiency bounds in other data fusion settings.
\cite{evans2021doubly} proposed a doubly robust estimator based on the influence function given in Proposition~\ref{prop:dr-efficiency} and mentioned that the estimator is the most efficient  when both the propensity score and imputation models are correctly specified.  We formally state this result here as a supplement and provide more rigorous analysis in the supplementary material. \cite{evans2021doubly} also discussed how to choose a proper function $t(V;\theta)$  to further improve estimation efficiency.

\begin{theorem}\label{thm:efficiency}
	Suppose that  $\Cone$ contains a correctly specified model for $\pi(V)$ and $\Ctwo$ contains a correctly specified model for $f(W\mid V)$. Then  $n^{1/2}(\widehat\theta_{\MR}-\theta_0)$ has an asymptotic normal distribution with mean zero and variance equal to the semiparametric efficiency bound.
\end{theorem}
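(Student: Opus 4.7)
The plan is to use Theorem~\ref{thm:pi-correct} as the starting point and show that, when additionally one of the imputation models is correct, the random variable $L$ coincides with $\Gamma^{-1}$ times the efficient influence function from Proposition~\ref{prop:dr-efficiency}; its variance is then automatically the semiparametric efficiency bound $\Gamma^{-1}\Omega(\Gamma^{-1})^\T$. Without loss of generality I would take $\pi^1(\eta^1)$ and $a^1(\gamma^1)$ to be the correctly specified models, so that $\pi^1(\eta_0^1)=\pi(V)$, $\gamma_*^1=\gamma_0^1$, and $\theta_*^1=\theta_0$.

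The first key observation is that, under the correct imputation model, $g(V;\gamma_*^1,\theta_*^1)=E\{s(W,V;\theta_0)\mid V\}$. Combining this with the regression specification $E(Y\mid W,V)=\mu(\theta_{10}^\T W+\theta_{20}^\T V)$ and the tower property gives the identity
\[
E\{s(W,V;\theta_0)\mid V\}=t(V;\theta_0)E(Y\mid V)=E\{Yt(V;\theta_0)\mid V\}.
\]
Hence the $(J+1)$-th through $(J+p)$-th entries of $h(V)$ read $E\{Yt(V;\theta_0)\mid V\}-E\{Yt(V;\theta_0)\}$, so the centered conditional expectation lies inside the linear span of the components of $h(V)$. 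Writing it as $Ch(V)$ for the natural $p\times(J+pK)$ selector matrix $C$, the tower property applied to the definition of $F$ yields $F=E[Ch(V)h(V)^\T/\pi(V)]=CH$, and therefore $FH^{-1}h(V)=Ch(V)=E\{Yt(V;\theta_0)\mid V\}-E\{Yt(V;\theta_0)\}$. The same argument with weight $1/\{1-\pi(V)\}$ gives $GT^{-1}h(V)=E\{s(W,V;\theta_0)\mid V\}-E\{s(W,V;\theta_0)\}$, which equals the same function of $V$.

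Substituting these projections into $Q$ and using the algebraic rearrangements $\frac{R-\pi(V)}{\pi(V)}=\frac{R}{\pi(V)}-1$ and $\frac{\pi(V)-R}{1-\pi(V)}=\frac{1-R}{1-\pi(V)}-1$, all marginal pieces cancel because $E\{Yt(V;\theta_0)\}=E\{s(W,V;\theta_0)\}$, leaving
\[
Q=\frac{R}{\pi(V)}\{Yt(V;\theta_0)-E[Yt(V;\theta_0)\mid V]\}-\frac{1-R}{1-\pi(V)}\{s(W,V;\theta_0)-E[s(W,V;\theta_0)\mid V]\},
\]
which is exactly $\Gamma$ times the efficient influence function of Proposition~\ref{prop:dr-efficiency}. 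I would then verify $E(Q\Psi^\T)=0$: under Assumption~\ref{ass:ignorability}, $R\indep (Y,W)\mid V$, so conditioning each summand of $Q$ on $V$ makes the residuals $Yt-E(Yt\mid V)$ and $s-E(s\mid V)$ integrate to zero against the $V$-measurable factor in $\Psi$. Hence $L=\Gamma^{-1}Q$; a standard variance computation using $R^2=R$ and $R(1-R)=0$ gives $\var(Q)=\Omega$, so $\var(L)=\Gamma^{-1}\Omega(\Gamma^{-1})^\T$ is the semiparametric efficiency bound.

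The main obstacle is the projection step that collapses $FH^{-1}h$ and $GT^{-1}h$ into true conditional expectations; it is this step that converts the double model protection into semiparametric efficiency, and it requires careful tracking of the block structure of $h(V)$ together with the key identity $E\{Yt(V;\theta_0)\mid V\}=E\{s(W,V;\theta_0)\mid V\}$. A secondary bookkeeping issue is that $g_i$ is a Monte Carlo average from $D$ draws of $a^k(W\mid V;\widehat\gamma^k)$, so one must argue that replacing $g_i$ by its conditional expectation (as $D\to\infty$) contributes only $o_p(n^{-1/2})$, which is handled by the empirical-process tools alluded to in Section~\ref{sec:estimator}. Once these technicalities are dispatched, the remaining algebra closes cleanly.
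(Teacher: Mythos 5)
Your proposal is correct and follows essentially the same route as the paper's proof: both reduce the problem to showing that $FH^{-1}h(\eta_*,\gamma_*,\theta_*)$ and $GT^{-1}h(\eta_*,\gamma_*,\theta_*)$ collapse to $E\{Yt(V;\theta_0)\mid V\}-E\{Yt(V;\theta_0)\}$ and $E\{s(W,V;\theta_0)\mid V\}-E\{s(W,V;\theta_0)\}$ because the correctly specified imputation model places these centered conditional expectations among the components of $h$, after which $Q$ simplifies to $\Gamma$ times the efficient influence function of Proposition~\ref{prop:dr-efficiency} and $E(Q\Psi^\T)=0$ kills the score-correction term in $L$. Your selector-matrix identity $F=CH$ is the same computation the paper phrases as a projection onto the span of $B_1$, so the two arguments are equivalent.
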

Different from the local efficiency of the doubly robust estimator proposed by~\cite{evans2021doubly}, the efficiency gain in Theorem~\ref{thm:efficiency} can be achieved without exactly knowing  which two among the multiple models are correctly specified. To make inference, we need to provide a consistent estimator for the asymptotic variance. Such an estimator can be achieved by replacing
the expectations involved in the asymptotic variance with their corresponding sample averages. 


\section{Simulation studies}\label{sec:sim}

In this section, we conduct simulation studies to evaluate the finite sample performance of the proposed procedure. The simulation model has two fully observed covariates $V=(V_1,V_2)^\T$ that are both generated from the standard normal distribution. We generate $W$ from $W\mid V\sim N(-0.5+1.5 V_1+V_2+3V_1*V_2,1)$ and $R$ from the Bernoulli distribution with $R\mid V\sim \Ber\{\pi(V)\}$, where $\pi(V)=\{1+\exp(-0.3+0.75V_1-0.75V_2)\}^{-1}$. The proportion of samples from the primary sample is about 56\%. We finally generate the outcome from $Y\mid W,V\sim N(1+2W+2V_1-1.5V_2,0.4)$. The correct models for $\pi(V)$  and $E(W\mid V)$ are given by $\logit\{\pi^1(\eta^1)\}=\eta_1^1+\eta_2^1V_1+\eta_3^1V_2$ and $a^1(\gamma^1)=\gamma_1^1+\gamma_2^1V_1+\gamma_3^1V_2+\gamma_4^1V_1*V_2$. We also consider the following incorrect working models: $\logit\{\pi^2(\eta^2)\}=\eta_1^2+\eta_2^2V_1$ and $a^2(\gamma^2)=\gamma_1^2+\gamma_2^2V_1+\gamma_3^2V_1*V_2$. The true value of the parameter of interest $\theta^\T=(\theta_1,\theta_2,\theta_3,\theta_4)=(1,2,2,-1.5)$.

We apply the proposed calibration procedure to estimate the parameters and use
the doubly robust  estimator by \cite{evans2021doubly} for comparison. Because the performances of both methods may depend on different combinations of models, we use the four-digit zero-one string to indicate which subset of the working models is used, with 1 denoting use and 0 non-use. The first two digits correspond to the correct and incorrect propensity score models, respectively. The last two digits correspond to the correct and incorrect imputation models, respectively. For example,  $\MR$-1010  represents the proposed estimator using the correct propensity score  and   correct imputation models, whereas $\MR$-1111 indicates that all models are used. For each estimator, we compute the Monte Carlo bias, root mean squared error and 95\% confidence interval coverage probability. The results for sample size $n=500$ and $n=2000$ based on 1000 replications are summarized in Table~\ref{tab:sim}. 

\begin{table}[h!]
	\centering
	\caption{Simulation results based on 1000  replications for $n=500$ and $n=2000$; values have been multiplied by 100.}
	\centering
	\resizebox{0.98\columnwidth}{!}{%
		\begin{tabular}{cccccccccccccccc}
			\hline
			&  \multicolumn{3}{c}{$\theta_1$} & &\multicolumn{3}{c}{$\theta_2$} & & \multicolumn{3}{c}{$\theta_3$} & & \multicolumn{3}{c}{$\theta_4$} \vspace{0.3mm}\\
			\cline{2-4}
			\cline{6-8}
			\cline{10-12}
			\cline{14-16}
			\addlinespace[1mm]
			Estimator & Bias  &RMSE &CP  & &Bias  &RMSE &CP
			& & Bias  &RMSE &CP & &Bias  &RMSE &CP  \\
			\hline
			\addlinespace[1mm]
			& & & & & && $n=500$& &&&& & & &\\
			\addlinespace[0.5mm]
			\hline
			\addlinespace[1mm]
			DR-1010 &$~~~~0$   &~~22 &93~~~~& &~~0  &10 &92.4~~  & &~~~1  &27 &91.9
			& & $-1$  &27 &94.4 \\
			DR-0110 &$~~~~0$   &~~22  &93.3& &~~0  &~~8 &94.8~~ & &~~~1  &28 &93.8
			& & $-1$  &23  &94.1 \\
		DR-1001 &~~~~1   &~~28  &93.7& &$-2$~  &95 &81.8~~ & &~~~3  &71 &93.8
			& & ~~~2  &215~~ &86.6 \\
			DR-0101 &136   &141 & ~~1.3 & &~~0  &18 &89.9~~~& &$-44~~$  &73 &92.3
			& & $-2$  &48 &93.3 \\
			\MR-1010 & $~~~~0$ &~~24  &94.8& &~~0  &~~9 &94.6~~ & & ~~~1 & 29 &93.9
			& & $-1$  & 29 &94.5  \\
			\MR-0110 &$~~~~0$  &~~25 &95~~~~& &~~0  &~~9 &95.2~~  & &~~~1  & 28 &94.3
			& & $-1$  & 28 & 94.6 \\
			\MR-1001 &~~~~0  &~~24  &95.6& &$-1$~  &10 &94.1~~ & &~~~6  & 33 &91.2
			& &~~~0   &34 & 94.9\\
			\MR-0101 &~~55 & ~~61 &50.3& &$-7$~  & 12 &83.5~~ & & ~29 & 42 &81.8
			& &$-6$   & 32  &94.8 \\
			\MR-1110 &$~~~~0$  & ~~24 &95.2  &  & ~~0& ~~9 &94.6~~ & &~~~1 & 29&94~~~ &  & $-1$
			&29 & 94.6 \\
			\MR-1101 	&~~~~0  &~~24  &95.8 &  &$-1$~ &10  &94.1~~  & &~~~5 &31 &91.5 &  & $-1$
			&33 & 94.8 \\
			\MR-1011  & $~~~~0$ & ~~25 &96.2 &  &~~0 &10  &94.9~~& &~~~0 &31 &94.9 &  & $-1$
			&30 & 95.7   \\
			\MR-0111 & $~~~~0$ & ~~25 &96.1   &  &~~0 &12  & 94.8~~& &$~~~0$ &31 &95.4 &  &$-1$ 
			&30 &95.2 \\
			\MR-1111 &~~~~0  & ~~25 & 96.6& &~~0  &10 &95~~~~~& &~~~0  &31 &95.6
			& &$-1$   &30 &95.7  \\
			\hline 
			\addlinespace[1mm]
			& & & & &&& $n=2000$& &&&& &&&\\
			\addlinespace[0.5mm]
			\hline
			\addlinespace[1mm]
			DR-1010 &~~~~0   &~~10 &94.6& &~~0  &~~5 &94.9~~  & &$-1$  &13 &94.9
			& & $~~~0$  &13 & 95.7\\
			DR-0110 &~~~~0   &~~10 &94.8 & &~~0  &~~4 &94.3~~ & &$-1$  &14 &95.5
			& & $~~~0$  &11 &95.2 \\
			DR-1001 &~~~~0   &~~12  &94.8& &~~0  &17 &83.8~~  & &~~~0  &19 &96~~~
			& & $-2$  &48 & 87.4\\
			DR-0101 &135   &136  &0~ & &~~0  &~~8 &93.5~~ & &$-46~~$  &53 &59.2
			& & ~~~1  &24 &92.9\\
			\MR-1010 & ~~~~0 &~~11  & 94.7& &~~0  &~~5 &95.7~~~& & $-1$ & 13 &95.4
			& & $~~~0$  & 14 &95.6  \\
			\MR-0110 &~~~~0  &~~12 &94.1& &~~0  &~~5 &94.9~~  & &$-1$  & 15 &95~~~
			& & $~~~0$  & 15  &94.7 \\
			\MR-1001 &$~~~~0$  &~~11  & 94.1& &$~~0$  &~~5 &94.9~~~& &~~~1  & 15 &93.5
			& &$~~~0$   &17 &95.3 \\
			\MR-0101 & ~~60 & ~~61 & 1~~& &$-7$~  & ~~9 &66.3~~~& & ~26 & 30 &54.2
			& &$-3$   & 17  &94.1\\
			\MR-1110 & ~~~~0 &~~11 &94.7 &  & ~~0&~~5  &95.8~~ & &$-1$ &13  & 95.1&  & $~~~0$
			& 14& 95.2  \\
			\MR-1101 & $~~~~0$ & ~~11&95.1 &  &$~~0$ &~~5  & 95~~~~~ & &~~~1 &14  &94.5 &  & $~~~0$
			&16 &95.1  \\
			\MR-1011 & ~~~~0 & ~~11& 94.9 &  &~~0 & ~~5 &95.8~~ & & $~~~0$& 14 &95.4 &  &$~~~0 $
			&14 & 95.5 \\
			\MR-0111 &~~~~0 &~~11 & 94.9 &  & ~~0&~~5  &95.9~~ & &$~~~0$ &14  &95.6 &  & $~~~0$
			& 14&  95.8 \\
			\MR-1111 &$~~~~0$  & ~~11  &94.8  & &~~0  &~~5 &96~~~~~~& &$~~~0$  &14 &95.6
			& &$~~~0$   &14 &95.6\\
			\hline
		\end{tabular}
	}
	\vspace{1ex}
	
	{\raggedright {\footnotesize RMSE, root mean squared error; CP, coverage probability.} \par}
	\label{tab:sim}
\end{table}

When one  propensity score and one imputation model are used, both the doubly robust estimator  and the calibartion estimator show negligible bias if either model is correctly specified; 
see $\DR$-1010, $\DR$-0110, $\DR$-1001, $\MR$-1010, $\MR$-0110, and $\MR$-1001 in Table~\ref{tab:sim}. However, the peformance of $\DR$-1001 based on the correct propensity score  and incorrect imputation model  is not satisfactory judging from its large root mean squared error. In contrast, 
the proposed calibration estimator $\MR$-1001 is more efficient with smaller root mean squared error. Similar findings are observed when neither model is correct, although both of these two estimators are significantly biased. This  shows that the proposed procedure is less sensitive to  extreme propensity score values and
produces not-too-bad estimates
when both models are misspecified. 
The calibration estimators based on more than two models show ignorable bias and are efficient in almost all scenarios. None of the existing doubly robust estimators can achieve such robustness. Through a calibration strategy, our method effectively accommodates multiple models and delivers more robust and stable estimation.

\section{Application}\label{sec:application}

As an illustration of the proposed method, we consider an assembled data set from the Consumer Expenditure Survey (CEX) and the Survey of Consumer Finance (SCF) to estimate the effects of household asset value on consumption. The CEX is a nationwide annual survey conducted by the U.S. Bureau of Labor Statistics to collect detailed information about household expenditures and some demographic variables. Unfortunately,  the CEX provides limited information on wealth data and thus the CEX alone is not sufficient for our purposes. Previous research \citep{bostic2009housing} has turned to a different triennial survey SCF conducted by Federal Reserve Board to obtain information on U.S. households' assets, liabilities, income and other demographic characteristics. Both the CEX and SCF began in the early 1980s, but for illustration, we focus on the data from CEX's 1997 fourth quarter survey and 1998 SCF, and restrict  the sample with household heads between 25 and 65 years of age. Besides, since the SCF oversamples relatively wealthy households \citep{bostic2009housing}, we truncate the SCF sample at 90$th$ percentiles of observed total household  income and net worth, as was done in  \cite{evans2021doubly}. 
The resulting dataset consists of 5904 households: 3388 from CEX and 2516 from SCF.

Our main interest is the household total net worth (netw) effect on total expenditures (expd) in 1997, adjusting for the total  income before taxes  and certain baseline characteristics of household head, including the continuous covariate age and the binary covariates \sex~($1=$ female), \marital~status ($1=$ married),  education levels: \eduone~($1=$ high school diploma or general educational development); \edutwo~($1=$ some college or Associate degree); \eduthr~($1=$ Bachelors degree or higher), and race types: \white~($1=$ White); \black~($1=$ Black/African American). A logarithmic transformation is required to linearize netw, expd and income, and so the model specification is:
\begin{align*}
	\lexpd=&\theta_1+\theta_2\lnetw+\theta_3\lincome+
	\theta_4\age+\theta_5\sex+\theta_6\marital\\
	&~~~+\theta_7\eduone+\theta_8\edutwo+\theta_9\eduthr+\theta_{10}\white+\theta_{11}\black+\epsilon,
\end{align*}
where $\lexpd$, $\lnetw$, and $\lincome$ denote respectively the logarithmic transformations of netw, expd and income, and $\epsilon$ has mean zero conditional on all covariates. Since the data for $\lexpd$ and $\lnetw$ are collected from two different sources, none of the subjects can simultaneously observe these two variables. It is thus challenging to estimate the effects of household net worth on expenditures while accounting for potential confounders. 

\begin{table}[h!]
	\centering
	\caption{Results analysis for the U.S. household expenditure data. Asterisks denote significance at 0.05 level. }
	\centering
	\begin{tabular}{lcccccccc}
		\hline
		\addlinespace[0.5mm]
		\multirow{ 2}{*}{}  &\multicolumn{2}{c}{CAL-1111} & & \multicolumn{2}{c}{DR-1010}  & & \multicolumn{2}{c}{DR-0101} \vspace{0.1mm}\\
		\cline{2-3}
		\cline{5-6}
		\cline{8-9}
		\addlinespace[0.9mm]
		& Estimate  &SE   &  & Estimate &SE &  & Estimate &SE\\
		\hline
		\addlinespace[1.3mm]
		intercept   &$-1.720^*$~ &0.294 &&$-1.880^*$~ &0.176 &&$-1.899^*$ &0.147  \\
		lnetw  &$~~0.376^*$ &0.096 &&$~~0.359^*$ &0.065 &&$~0.002$ &0.002  \\
		lincome   &~0.043~ &0.098 &&~0.097~ &0.060 &&$~~~0.449^*$ &0.025\\
		age &$-0.156^*$~~ &0.065 &&$-0.157^*$~~ &0.043 &&$~~~0.051^*$ &0.014\\
		sex  &~0.034~ &0.105 & &~0.079~~ &0.064 &&$~0.043$ &0.033  \\
		married    &~0.088~ &0.099 & &~0.127~~ &0.065 &&$~~~0.203^*$ &0.038 \\
		edu1  &~0.077~ &0.126 & &~0.097~~ &0.089 &&$~~~0.142^*$ &0.051  \\
		edu2  &~0.064~ &0.146 & &~0.063~~ &0.114 &&$~~~0.207^*$ &0.055 \\
		edu3  &$-0.038~$~~ &0.157  & &$-0.015~~~~$ &0.144 &&$~~~0.312^*$ &0.056 \\
		white   &$-0.063~~~$ &0.113 & &$-0.072~~~~$ &0.089 &&$~~0.031$ &0.077 \\
		black  &$~0.051$~ &0.149 & &$-0.008~~~~$ &0.115 &&$-0.049~$ &0.088  \\
		\addlinespace[1mm]
		\hline
	\end{tabular}
	\label{tab:realdata}
\end{table}

We apply the proposed method to estimate coefficients in the linear model. 	
To implement our procedure, we consider the following two imputation  models:
a linear regression model $a^1(\gamma^1)$ with  all main effects and quadratic terms for both age and $\lincome$; a second linear regression model $a^2(\gamma^2)$ with all  main effects and an interaction term between age and $\lincome$.
For the propensity score model, we posit two working models that  employ the same
regressors as the imputation regression models in the logit transformation, that is, logit$\{\pi^j(\eta^j)\}=a^j(\gamma^j)$ for $j=1,2$. Because there are missing values in the original survey data, the publicly available data set consists of five imputed replicates. We thus follow \cite{evans2021doubly} to perform estimation for each replicate and combine the results using Rubin's rule \citep{rubin2004multiple}.
We report  the point estimates and standard errors of our analysis results using all the four working models in Table~\ref{tab:realdata}. We also include two classes of the doubly robust estimator results that separately use  the two working models $\{\pi^1(\eta^1),a^1(\gamma^1)\}$ and  the other two working models $\{\pi^2(\eta^2),a^2(\gamma^2)\}$ for comparison.
As indicated by  \cite{evans2021doubly},  $a^1(\gamma^1)$ 	is  nearly a correct specification of  the imputation model. Consequently,  the results from the doubly robust method using $\{\pi^1(\eta^1),a^1(\gamma^1)\}$ are very similar to the proposed calibration method, as shown  in Table~\ref{tab:realdata}.
Based on these results,
households with higher net worth have significantly higher total expenditure, adjusting for household incomes and other covariates.  The covariates income, sex, marital, edu1, edu2, edu3, white and black do not have substantial impacts on household expenditure, whereas age exhibits a significantly negative impact, in that the older household heads have lower total expenditure. These associations generally agree with  previous findings from \cite{bostic2009housing} and~\cite{evans2021doubly}. However, when $\pi^2(\eta^2)$ and $a^2(\gamma^2)$ are chosen to be the working models for the doubly robust estimator, the empirical results are found to be significantly different and may lead to incorrect conclusions. This implies that the doubly robust estimator could suffer from severe bias if both the propensity score and imputation models are unfortunately misspecified. To achieve the double robustness property, practitioners should collect more information about the underlying mechanism
and scrutinize the modelling carefully. The proposed calibration procedure provides a way of improving specifications by incorporating multiple working models. Although incorrect models may be included, 				
the resulting estimator is guaranteed to be consistent as long as there exists one correctly specified model in the estimating procedure.
\section{Discussion}\label{sec:discussion}

We have proposed a calibration approach to regression analysis problems where the  outcome and covariates data are fused from two different sources. 				
Since no subject has complete information in the data fusion setting, existing methods developed for missing data problems cannot be directly applied here. A class of doubly robust estimators based on inverse propensity score weighting is designed particularly for this setting \citep{shu2020improved,evans2021doubly}. Although these estimators offer two chances of achieving consistent estimation, it may be risky to assume either the propensity score or imputation model is correct in practice and they are also sensitive to extreme propensity score values. Our proposed estimator mitigates these issues.
It involves multiple working models that could provide more opportunities to achieve correct specification of the two nuisance models. Since our development builds on the empirical likelihood method that circumvents the use of inverse propensity score, the resulting estimator is not affected dramatically by extreme values. 
Simulation results also demonstrate that the proposed estimator performs not too bad  even if  no model is correctly specified due to the nature of  calibration.

The proposed approach may be improved or extended in several directions. Firstly, the working models are all parametric in this paper, and one can use modern machine learning techniques to further improve robustness. The resulting estimator should still be consistent when one working model is correct, but the rate of convergence may affect the asymptotic distribution. Secondly, it would be interesting to extend the mean regression  to quantile regression problems in data fusion settings. The study of these
issues is beyond the scope of this paper and we leave them as future research topics.

%



					\section*{Appendix}
					
					\begin{proof}[Proof of Proposition~\ref{prop:identifiable}]
						Under Assumption~\ref{ass:ignorability}, we have 
						\begin{equation*}
							f(Y\mid V)= f(Y\mid V,R=1),\quad f(W\mid V)= f(W\mid V,R=0).
						\end{equation*}
						This implies that $f(Y\mid V)$ and $f(W\mid V)$ are  identifiable. Under Assumption~\ref{ass:instrumental-variable}, $Y\indep Z\mid W,X$ and $V=(Z,X^\T)^\T$. Thus, for any given $y$, we have
						\begin{equation*}
							f(y\mid v)=E\{f(y\mid W,x)\mid z,x\}.
						\end{equation*}
						If there exists $f_1(y\mid w,x)$ and $f_2(y\mid w, x)$ satisfying the above equation, then the completeness condition in Assumption~\ref{ass:completeness-condition} implies that $f_1(y\mid w,x)=f_2(y\mid w,x)$; that is, $f(y\mid w,v)$ is identifiable,  and hence, the parameter $\theta_0$ is identifiable.
					\end{proof}
					
					\begin{lemma}\label{lem:weights}
						Suppose that $\pi^1(\eta^1)$ is correctly specified for $\pi(V)$. We have
						\begin{align*}
							\widehat\omega_{1i}=&\frac{1}{n\pi_i^1(\widehat\eta^1)}+o_p(1),\quad ~~~~~~~~~~~(i=1,\ldots,m),\\
							\widehat\omega_{0i}=&\frac{1}{n\{1-\pi_i^1(\widehat\eta^1)\}}+o_p(1),\quad ~~~(i=m+1,\ldots,n).
						\end{align*}
					\end{lemma}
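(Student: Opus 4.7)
The plan is to verify the explicit identity relating $\widehat\omega_{1i}$ to the empirical-likelihood multiplier $\widehat\lambda$ displayed in the main text, then combine $\widehat\lambda = o_p(1)$ (standard empirical-likelihood asymptotics) with $\widehat\tau^1 = m/n + o_p(1)$ (law of large numbers) by a first-order expansion; the parallel argument will handle $\widehat\omega_{0i}$. For the identity, denoting $\widetilde\omega_{1i} := (1/m)\{\widehat\tau^1/\pi^1_i(\widehat\eta^1)\}/\{1+\widehat\lambda^\T \widehat h_i/\pi^1_i(\widehat\eta^1)\}$ and recalling $\widehat p_i = (1/m)/\{1+\widehat\lambda^\T \widehat h_i/\pi^1_i\}$, I have $\widetilde\omega_{1i} = \widehat p_i\,\widehat\tau^1/\pi^1_i$. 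The moment constraint $\sum \widetilde\omega_{1i}\widehat h_i = \widehat\tau^1\sum \widehat p_i \widehat h_i/\pi^1_i = 0$ is immediate from the Lagrange equation for $\widehat\lambda$. The sum-to-one constraint follows from the $j=1$ moment condition on $\widehat p_i$: $\sum \widehat p_i(\pi^1_i-\widehat\tau^1)/\pi^1_i = 0$ gives $\widehat\tau^1\sum \widehat p_i/\pi^1_i = \sum \widehat p_i = 1$, so $\sum \widetilde\omega_{1i} = 1$. Nonnegativity is inherited from the $\widehat p_i$ problem, and uniqueness of the strictly concave empirical-likelihood maximization identifies $\widetilde\omega_{1i}$ with $\widehat\omega_{1i}$.

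Next, I would show $\widehat\lambda = o_p(1)$ via standard empirical-likelihood asymptotics. Under correct specification of $\pi^1(\eta^1)$, Bayes' rule gives $E\{b(V)/\pi(V)\mid R=1\} = E\{b(V)\}/\pr(R=1)$ for every integrable $b$, which in turn implies $E\{h_i(\eta_*,\gamma_*,\theta_*)/\pi^1(\eta^1_0) \mid R=1\} = 0$ block by block. Assumption~\ref{ass:positivity} bounds $1/\pi^1$ uniformly, so under mild smoothness and moment conditions on the candidate models, stochastic equicontinuity of the empirical process indexed by $(\eta,\gamma,\theta)$ combined with $n^{1/2}$-consistency of the nuisance estimators shows that $\widehat\lambda$ behaves like the standard empirical-likelihood multiplier based on a fixed moment function; classical theory then delivers $\widehat\lambda = O_p(n^{-1/2}) = o_p(1)$.

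Combining these intermediate results, $\widehat\tau^1 = n^{-1}\sum_i \pi^1_i(\widehat\eta^1) \overset{p}{\to} E\{\pi(V)\} = \pr(R=1) = \lim m/n$, so $\widehat\tau^1 - m/n = o_p(1)$. The identity in Step 1, together with $\widehat\lambda = o_p(1)$, uniform boundedness of $\widehat h_i/\pi^1_i$ (Assumption~\ref{ass:positivity} and moment conditions on the candidate models), and a one-term expansion of $1/\{1+\widehat\lambda^\T \widehat h_i/\pi^1_i\}$, then gives $\widehat\omega_{1i} = (\widehat\tau^1/m)/\pi^1_i(\widehat\eta^1) + o_p(1) = 1/\{n\pi^1_i(\widehat\eta^1)\} + o_p(1)$ uniformly in $i$.

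For $\widehat\omega_{0i}$ the argument is entirely symmetric: introduce an auxiliary-sample analogue $\widehat q_i$ maximizing $\prod q_i$ subject to $\sum q_i \widehat h_i/\{1-\pi^1_i(\widehat\eta^1)\} = 0$ with multiplier $\widehat\nu$, derive the companion identity with $1-\pi^1$ and $(n-m)$ replacing $\pi^1$ and $m$, and repeat Steps 2 and 3 using $(1-\widehat\tau^1) - (n-m)/n = o_p(1)$. The main obstacle is the empirical-likelihood step: showing $\widehat\lambda = o_p(1)$ when $\widehat h_i$ is itself a sample-estimated quantity depending on all the nuisance estimators $\widehat\eta,\widehat\gamma,\widehat\theta,\widehat\tau,\widehat\psi$, which requires an empirical-process expansion rather than a plain law-of-large-numbers bound and relies on the uniform positivity of $\pi^1$ and $1-\pi^1$ furnished by Assumption~\ref{ass:positivity}.
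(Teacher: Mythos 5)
Your proposal follows essentially the same route as the paper: establish the identity $\widehat\omega_{1i}=\widehat p_i\,\widehat\tau^1/\pi_i^1(\widehat\eta^1)$, show $\widehat\lambda=o_p(1)$ from the population moment condition $E\{R\,h(\eta_*,\gamma_*,\theta_*)/\pi(V)\}=0$ under correct specification of $\pi^1$, combine with $\widehat\tau^1-m/n=o_p(1)$, and repeat symmetrically for $\widehat\omega_{0i}$ via an auxiliary-sample empirical likelihood problem with multiplier $\widehat\delta$. The one place where your argument is logically incomplete is the identification step: verifying that $\widetilde\omega_{1i}:=\widehat p_i\,\widehat\tau^1/\pi_i^1(\widehat\eta^1)$ is \emph{feasible} for the $\omega_1$-optimization (nonnegative, sum-to-one, satisfying the calibration constraints) and then invoking uniqueness does not by itself show it is the \emph{maximizer} — a feasible point of a strictly concave program need not be its optimum. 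To close this you must also check that $\widetilde\omega_{1i}$ has the dual (Lagrangian) form $(1/m)/\{1+\widehat\rho^{\T}\widehat h_i\}$ for a $\widehat\rho$ solving the stationarity equation; this follows from writing $\pi_i^1(\widehat\eta^1)+\widehat\lambda^{\T}\widehat h_i=\widehat\tau^1+(\widehat\lambda_1+1)\{\pi_i^1(\widehat\eta^1)-\widehat\tau^1\}+\sum_{l\ge 2}\widehat\lambda_l\widehat h_{i,l}$, i.e.\ $\widehat\rho_1=(\widehat\lambda_1+1)/\widehat\tau^1$ and $\widehat\rho_l=\widehat\lambda_l/\widehat\tau^1$ for $l\ge 2$, which is exactly the correspondence the paper derives (and the analogous $\widehat\alpha$--$\widehat\delta$ correspondence on the auxiliary sample). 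With that one line added, your argument matches the paper's proof; the remaining steps ($\widehat\lambda=o_p(1)$ via standard empirical-likelihood theory applied at the solution $\lambda=0$ of the population equation, and the law of large numbers for $\widehat\tau^1$) are handled at the same level of rigor as in the paper.
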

				\begin{proof}
					We first show the results for $\widehat\omega_{1i}$. Based on the main text, the $(J+pK)$-dimensional Lagrange multipliers $\widehat\lambda = (\widehat\lambda_1,\ldots,\widehat\lambda_{J+pK})^\T$ satisfy:
				\begin{equation*}
					\frac{1}{m}\sum_{i=1}^m\frac{\widehat h_{i}(\widehat\eta,\widehat\gamma,\widehat\theta)/\pi_i^1(\widehat\eta^1)}{1+\widehat\lambda^\T \widehat h_{i}(\widehat\eta,\widehat\gamma,\widehat\theta)/\pi_i^1(\widehat\eta^1)}=0,~~ \text{and $1+\widehat\lambda^\T \widehat h_{i}(\widehat\eta,\widehat\gamma,\widehat\theta)/\pi_i^1(\widehat\eta^1)\geq 0$ for $i=1,\ldots,m$.}
				\end{equation*}
				Note that
				\begin{equation}\label{eqn:p-Lagrange-multiplier-constraint}
					\begin{aligned}
						&\frac{1}{m}\sum_{i=1}^m\frac{\widehat h_{i}(\widehat\eta,\widehat\gamma,\widehat\theta)/\pi_i^1(\widehat\eta^1)}{1+\widehat\lambda^\T \widehat h_{i}(\widehat\eta,\widehat\gamma,\widehat\theta)/\pi_i^1(\widehat\eta^1)}\\
						=&\frac{1}{\widehat\tau^1}\frac{1}{m}\sum_{i=1}^m\frac{\widehat h_{i}(\widehat\eta,\widehat\gamma,\widehat\theta)}{1+\frac{\pi_i^1(\widehat\eta^1)-\widehat\tau^1}{\widehat\tau^1}+\big(\frac{\lambda}{\widehat\tau^1} \big)^\T\widehat h_{i}(\widehat\eta,\widehat\gamma,\widehat\theta)}\\
						=&\frac{1}{\widehat\tau^1}\frac{1}{m}\sum_{i=1}^m\frac{\widehat h_{i}(\widehat\eta,\widehat\gamma,\widehat\theta)}{1+\Big(\frac{\lambda_1+1}{\widehat\tau^1},\frac{\lambda_2}{\widehat\tau^1},
							\ldots,\frac{\lambda_{J+pK}}{\widehat\tau^1}\Big)^\T\widehat h_{i}(\widehat\eta,\widehat\gamma,\widehat\theta)}.\\
					\end{aligned}
				\end{equation}
				Comparing this equation with the equation of $\widehat\rho$ in the main text, we have $\widehat\rho_1=(\widehat\lambda_1+1)/\widehat\tau^1$ and $\widehat\rho_l=\widehat\lambda_l/\widehat\tau^1$ for $l=2,\ldots,J+pK$. This implies that
				\begin{equation*}
					\widehat\omega_{1i} = \frac{1}{m}\frac{\widehat\tau^1/\pi_i^1(\widehat\eta^1)}{1+\widehat\lambda^\T \widehat h_{i}(\widehat\eta,\widehat\gamma,\widehat\theta)/\pi_i^1(\widehat\eta^1)}=\frac{\widehat p_i\widehat\tau^1}{\pi_i^1(\widehat\eta^1)}.
				\end{equation*}
			Since 
			$	E\{Rh(\eta_*,\gamma_*,\theta_*)/\pi(V) \}=0$
			and $\eta_*^1=\eta_0^1$, $0$ is the solution to
			\begin{equation*}
				E\bigg\{\frac{Rh(\eta_*,\gamma_*,\theta_*)/\pi^1(\eta_*^1)}{1+\lambda^\T h(\eta_*,\gamma_*,\theta_*)/\pi^1(\eta_*^1)} \bigg\}=0
			\end{equation*}
			as an equation of $\lambda$. Then from the theory of empirical likelihood \citep{owen2001empirical}, $\widehat\lambda=o_p(1)$. Since $\widehat\tau^1-m/n=o_p(1)$, we have $\widehat\omega_{1i}=1/\{n\pi_i^1(\widehat\eta^1)\}+o_p(1)$. 
			
			Next, we show the results for $\widehat\omega_{0i}$. we build the connection between $\widehat\omega_{0i}$ and another empirical likelihood estimator based on the auxiliary sample $\{i:i=m+1,\ldots,n\}$ using the prior knowledge that $\pi^1(\eta^1)$ is correctly specified. Let $q_i$ denote the conditional empirical probability mass on $(Y_i,W_i,V_i)$ given $R_i=0$, $i=m+1,\ldots,n$. Then the estimator of $q_i$ is obtained by solving the following constrained optimization:
			\begin{equation*}\label{eqn:q-constraints}
				\begin{aligned}
					&\max_{q_{m+1},\ldots,q_n}\prod_{i=m+1}^n q_i\quad \text{subject to}~ q_i\geq 0,
					~\sum_{i=m+1}^n q_i \frac{\pi_i^j(\widehat\eta^j)-\widehat\tau^j}{1-\pi_i^1(\widehat\eta^1)}=0\quad (j=1,\ldots,J),\\
					&~~~~~~~~~~~~~~~~~~~~~~~~~~~~~~~~~~~~~~~\sum_{i=m+1}^n q_i\frac{g_{i}(\widehat\gamma^k,\widehat\theta^k)-\widehat\psi^k }{1-\pi_i^1(\widehat\eta^1)}=0\quad (k=1,\ldots,K).
				\end{aligned}
			\end{equation*}
			By the Lagrange multipliers method, we have
			\begin{equation*}
				\widehat q_i = \frac{1}{n-m}\frac{1}{1+\widehat\delta^\T\widehat h_{i}(\widehat\eta,\widehat\gamma,\widehat\theta)/\{1-\pi_i^1(\widehat\eta^1)\}},\quad (i=m+1,\ldots,n),
			\end{equation*}
			where the $(J+pK)$-dimensional Lagrange multipliers $\widehat\delta=(\widehat\delta_1,\ldots,\widehat\delta_{J+pK})^\T$ solves the following equation:
			\begin{equation*}
				\frac{1}{n-m}\sum_{i=m+1}^n\frac{\widehat h_{i}(\widehat\eta,\widehat\gamma,\widehat\theta)/\{1-\pi_i^1(\widehat\eta^1)\}}{1+\widehat\delta^\T \widehat h_{i}(\widehat\eta,\widehat\gamma,\widehat\theta)/\{1-\pi_i^1(\widehat\eta^1)\}}=0,
			\end{equation*}
			and satisfies $1+\widehat\delta^\T \widehat h_{i}(\widehat\eta,\widehat\gamma,\widehat\theta)/\{1-\pi_i^1(\widehat\eta^1)\}\geq 0$ for $i=m+1,\ldots,n$. Similar to the derivation in~\eqref{eqn:p-Lagrange-multiplier-constraint}, we have
			\begin{equation*}
				\begin{aligned}
					&\frac{1}{n-m}\sum_{i=m+1}^n\frac{\widehat h_{i}(\widehat\eta,\widehat\gamma,\widehat\theta)/\{1-\pi_i^1(\widehat\eta^1)\}}{1+\widehat\delta^\T \widehat h_{i}(\widehat\eta,\widehat\gamma,\widehat\theta)/\{1-\pi_i^1(\widehat\eta^1)\}}\\
					=&\frac{1}{1-\widehat\tau^1}\frac{1}{n-m}\sum_{i=m+1}^n\frac{\widehat h_{i}(\widehat\eta,\widehat\gamma,\widehat\theta)}{1+\Big(\frac{\delta_1-1}{1-\widehat\tau^1},\frac{\delta_2}{1-\widehat\tau^1}
						\ldots,\frac{\delta_{J+pK}}{1-\widehat\tau^1}\Big)^\T\widehat h_{i}(\widehat\eta,\widehat\gamma,\widehat\theta)}.
				\end{aligned}
			\end{equation*}
			By comparing this equation with the equation of $\widehat\alpha$ in the main text, we have $\widehat\alpha_1=(\widehat\delta-1)/(1-\widehat\tau^1)$, and $\widehat\alpha_l=\widehat\delta_l/(1-\widehat\tau^1)$ for $l=2,\ldots,J+pK$. Therefore,
			\begin{equation*}
				\widehat\omega_{0i}=\frac{1}{n-m}\frac{(1-\widehat\tau^1)/\{1-\pi_i^1(\widehat\eta^1)\}}{1+\widehat\delta^\T\widehat h_{i}(\widehat\eta,\widehat\gamma,\widehat\theta)/\{1-\pi_i^1(\widehat\eta^1)\}}=\frac{q_i(1-\widehat\tau^1)}{1-\pi_i^1(\widehat\eta^1)},\quad (i=m+1,\ldots,n).
			\end{equation*}
			Since
		$E[(1-R)h(\eta_*,\gamma_*,\theta_*)/\{1-\pi(V)\} ]=0$
		and $\eta_*^1=\eta_0^1$, $0$ is the solution to
		\begin{equation*}
			E\bigg[\frac{(1-R)h(\eta_*,\gamma_*,\theta_*)/\{1-\pi^1(\eta_*^1)\}}{1+\delta^\T h(\eta_*,\gamma_*,\theta_*)/\{1-\pi^1(\eta_*^1)\}} \bigg]=0
		\end{equation*}
		as an equation of $\lambda$. Thus, $\widehat\delta=o_p(1)$. In addition, because $1-\widehat\tau^1- (n-m)/n=o_p(1)$, we then have $\widehat\omega_{0i}=1/[n\{1-\pi_i^1(\widehat\eta^1)\}]+o_p(1)$.

				\end{proof}
					
					\begin{proof}[Proof of Theorem~\ref{thm:consistency-propensity}]
						We aim to show that  $\theta_0$ is the solution to the equation for $\widehat\theta_{\MR}$ as $n\rightarrow\infty$. Then the estimator $\widehat\theta_{\MR}$ is a consistent estimator of $\theta_0$.
						Without loss of generality, we assume $\pi^1(\eta^1)$ is correctly specified for $\pi(V)$.
						As shown in Lemma~\ref{lem:weights}, we have $\widehat\omega_{1i}=1/\{n\pi_i^1(\widehat\eta^1)\}+o_p(1)$, and $\widehat\omega_{0i}=1/[n\{1-\pi_i^1(\widehat\eta^1)\}]+o_p(1)$. Thus,
						\begin{equation}\label{eqn:estimation-equation-omega1}
							\begin{aligned}
								&\Big|\sum_{i=1}^m\widehat\omega_{1i}Y_it(V_i;\theta_0)-E\{Yt(V;\theta_0)\} \Big|\\
								\leq &\bigg|\sum_{i=1}^m\widehat\omega_{1i}Y_it(V_i;\theta_0)-\frac{1}{n}\sum_{i=1}^n\frac{R_i}{
									\pi^1(\widehat\eta^1)}Y_it(V_i;\theta_0) \bigg|\\
								&~+\bigg|\frac{1}{n}\sum_{i=1}^n\frac{R_i}{
									\pi^1(\widehat\eta^1)}Y_it(V_i;\theta_0)-\frac{1}{n}\sum_{i=1}^n\frac{R_i}{
									\pi^1(\eta_0^1)}Y_it(V_i;\theta_0) \bigg|\\
								&~+\bigg| \frac{1}{n}\sum_{i=1}^n\frac{R_i}{
									\pi^1(\eta_0^1)}Y_it(V_i;\theta_0)-E\{Yt(V;\theta_0)\}\bigg|=o_p(1).
							\end{aligned}
						\end{equation}
						All the three terms in the above inequality are equal to $o_p(1)$. Among them,
			the first term holds due to Lemma \ref{lem:weights}. The second term holds due to the consistency of $\widehat{\eta}^1$. Specifically,
			\begin{equation}
			\label{eq: taylor-expan}
			    \begin{aligned}
			        &\bigg|\dfrac1n\sum_{i=1}^n\dfrac{R_i}{\pi^1\left(\widehat\eta^1\right)}Y_it\left(V_i;\theta_0\right)-\dfrac1n\sum_{i=1}^n\dfrac{R_i}{\pi^1\left(\eta_0^1\right)}Y_it\left(V_i;\theta_0\right)\bigg|\\
			        &~~~~~~=~\bigg|\dfrac1n\sum_{i=1}^n\left\{\dfrac1{\pi^1\left(\widehat\eta^1\right)}-\dfrac1{\pi^1\left(\eta_0^1\right)}\right\}Y_iR_iY_it\left(V_i;\theta_0\right)\bigg|\\
			        &~~~~~~\leq~\bigg|\dfrac1n\sum_{i=1}^n\dfrac1{\left\{\pi^1\left(\widehat\eta^\dag\right)\right\}^2}Y_iR_iY_it\left(V_i;\theta_0\right)\bigg|\left|\widehat\eta^1-\eta_0^1\right|=o_p(1),
			    \end{aligned}
			\end{equation}
			where $\eta^\dag$ is an intermediate value between $\widehat{\eta}^1$ and $\eta_0^1$. The third term holds due to the law of large numbers.		Similarly, we have
						\begin{equation*}
							\begin{aligned}
								&\bigg|\sum_{i=m+1}^n \widehat\omega_{0i} s(W_i,V_i;\theta_0)-E\{Yt(V;\theta_0)\}\bigg|\\
								=&~~\bigg|\sum_{i=m+1}^n \widehat\omega_{0i} E\{Yt(V;\theta_0)\mid W_i,V_i;\theta_0\}-E\{Yt(V;\theta_0)\}\bigg|\\
								=&~~\bigg|\sum_{i=m+1}^n \omega_{0i} E\{Yt(V;\theta_0)\mid W_i,V_i;\theta_0\}-\frac{1}{n}\sum_{i=1}^n\frac{1-R_i}{1-\pi_i^1(\widehat\eta^1)}E\{Yt(V;\theta_0)\mid W_i,V_i;\theta_0\}\bigg|\\
								&+\bigg|\frac{1}{n}\sum_{i=1}^n\frac{1-R_i}{1-\pi_i^1(\widehat\eta^1)}E\{Yt(V;\theta_0)\mid W_i,V_i;\theta_0\}-\frac{1}{n}\sum_{i=1}^n\frac{1-R_i}{1-\pi_i^1(\eta_0^1)}E\{Yt(V;\theta_0)\mid W_i,V_i;\theta_0\} \bigg|\\
								&+ \bigg|\frac{1}{n}\sum_{i=1}^n\frac{1-R_i}{1-\pi_i^1(\eta_0^1)}E\{Yt(V;\theta_0)\mid W_i,V_i;\theta_0\}-E\{Yt(V;\theta_0)\} \bigg|=o_p(1).
							\end{aligned}
						\end{equation*}
						Combining this equation with~\eqref{eqn:estimation-equation-omega1} implies that $\theta_0$ is a solution to the equation for $\widehat\theta_{\MR}$ as $n\rightarrow\infty$, and hence, $\widehat\theta_{\MR}$ is a consistent estimator of $\theta_0$ when one of the models in $\Cone$ is correctly specified.
					\end{proof}
					
					\begin{proof}[Proof of Theorem~\ref{thm:consistency-misscov}] 		We aim to show that  $\theta_0$ is the solution to the equation for $\widehat\theta_{\MR}$ again as $n\rightarrow\infty$. Then the estimator $\widehat\theta_{\MR}$ is a consistent estimator of $\theta_0$. 	Without loss of generality,  we assume that $a^1(\gamma^1)$ is correctly specified for $f(W\mid V)$.  Then we have $\widehat\gamma^1\xrightarrow p\gamma^1$ 
				and   $\widehat{\theta}^1\xrightarrow p\theta_0$.  Let $\rho_*$ denote the probability limit of $\widehat\rho$.	
						Note that one of the constraints in~\eqref{eqn:omega1-constraints} is:
						\begin{equation*}
							\begin{aligned}
								\sum_{i=1}^m \widehat \omega_{1i}\bigg[\frac{1}{D}\sum_{d=1}^Ds\{W_i^d(\widehat\gamma^1),V_i;\widehat\theta^1\} \bigg]=\widehat\psi^1 =\frac{1}{n}\sum_{i=1}^n \bigg[\frac{1}{D}\sum_{d=1}^Ds\{W_i^d(\widehat\gamma^1),V_i;\widehat\theta^1\} \bigg].
							\end{aligned}
						\end{equation*}
						Then we have 
		\begin{eqnarray*}\label{eqn:omega1-constraints-missing-cov}
								&&\bigg| \sum_{i=1}^m\widehat\omega_{1i}Y_it(V_i;\theta_0)-E\{Yt(V;\theta_0)\}\bigg|\\
							 &	\leq&\bigg| \sum_{i=1}^m\widehat\omega_{1i}\bigg[Y_it(V_i;\theta_0)-\frac{1}{D}\sum_{d=1}^Ds\{W_i^d(\widehat\gamma^1),V_i;\widehat\theta^1\}\bigg]\bigg| + \bigg|\widehat\psi^1-E\{Yt(V;\theta_0)\} \bigg|\\
							 &	\leq&\bigg|\sum_{i=1}^m\widehat\omega_{1i}\bigg[Y_it(V_i;\theta_0)-\frac{1}{D}\sum_{d=1}^Ds\{W_i^d(\widehat\gamma^1),V_i;\widehat\theta^1\}\bigg]\\
								&&~~~~~~-\frac{1}{m}\frac{1}{1+\rho_{*}^{\T}h(\eta_*,\gamma_*,\theta_*)}\sum_{i=1}^n R_i\bigg[Y_it(V_i;\theta_0)-\frac{1}{D}\sum_{d=1}^Ds\{W_i^d(\widehat\gamma^1),V_i;\widehat\theta^1\}\bigg]\bigg|\\
								&&~+\bigg| \frac{1}{m}\frac{1}{1+\rho_{*}^{\T}h(\eta_*,\gamma_*,\theta_*)}\sum_{i=1}^n R_i\frac{1}{D}\sum_{d=1}^D\Big[s\{W_i^d(\widehat\gamma^1),V_i;\widehat\theta^1\}-s\{W_i^d(\gamma_0^1),V_i;\theta_0\} \Big]		
								\bigg|\\
								&&~+\bigg|\frac{1}{m}\frac{1}{1+\rho_{*}^{\T}h(\eta_*,\gamma_*,\theta_*)}\sum_{i=1}^n R_i\bigg[Y_it(V_i;\theta_0)- \frac{1}{D}\sum_{d=1}^Ds\{W_i^d(\gamma_0^1),V_i;\theta_0\}\bigg] \bigg|\\
								&&~+\bigg| \frac{1}{n}\sum_{i=1}^n\frac{1}{D}\sum_{d=1}^Ds\{W_i^d(\widehat\gamma^1),V_i;\widehat\theta^1\} -\frac{1}{n}\sum_{i=1}^n\frac{1}{D}\sum_{d=1}^Ds\{W_i^d(\gamma_0^1),V_i;\theta_0\} \bigg|\\
								&&~+\bigg| \frac{1}{n}\sum_{i=1}^n\frac{1}{D}\sum_{d=1}^Ds\{W_i^d(\gamma_0^1),V_i;\theta_0^1\}-E\{Yt(V;\theta_0)\}\bigg|=o_p(1).
						\end{eqnarray*}
All terms in the last inequality are equal to $o_p(1)$. Specifically,
the first  term holds because
 \begin{eqnarray*} 
 &	&\bigg|\sum_{i=1}^m\widehat\omega_{1i}\bigg[Y_it(V_i;\theta_0)-\frac{1}{D}\sum_{d=1}^Ds\big\{W_i^d(\widehat\gamma^1),V_i;\widehat\theta^1\big\}\bigg]\\
								&&~~~~~~-\frac{1}{m}\frac{1}{1+\rho_{*}^{\T}h(\eta_*,\gamma_*,\theta_*)}\sum_{i=1}^n R_i\bigg[Y_it(V_i;\theta_0)-\frac{1}{D}\sum_{d=1}^Ds\big\{W_i^d(\widehat\gamma^1),V_i;\widehat\theta^1\big\}\bigg]\bigg| \\ 
								&&	=~~~ \bigg|\frac{1}{m}\sum_{i=1}^nR_i \left\{\frac{1}{1+\widehat\rho^{\T}h(\widehat\eta,\widehat\gamma,\widehat\theta)}-\frac{1}{1+\rho_{*}^{\T}h(\eta_*,\gamma_*,\theta_*)}\right\}\\
					&	 &~~~~~~~~			\times\bigg[Y_it(V_i;\theta_0)-\frac{1}{D}\sum_{d=1}^Ds\big\{W_i^d(\widehat\gamma^1),V_i;\widehat\theta^1\big\}\bigg]\bigg| \\ 
						&	 &=~~~o_p(1).			 \end{eqnarray*} 
						The second  to the fourth   terms  can be similarly proved  as in \eqref{eq: taylor-expan},  and the final term holds due to  the law of large numbers.
						We also note that one of the constraints in~\eqref{eqn:omega0-constraints} is:
						\begin{equation*}
							\begin{aligned}
								&\sum_{i=m+1}^n \widehat \omega_{0i} \bigg[\frac{1}{D}\sum_{d=1}^D s\{W_i^d(\widehat\gamma^1),V_i;\widehat\theta^1\} \bigg]=\frac{1}{n}\sum_{i=1}^n  \bigg[\frac{1}{D}\sum_{d=1}^D s\{W_i^d(\widehat\gamma^1),V_i;\widehat\theta^1\} \bigg].
							\end{aligned}
						\end{equation*}
						Then similar to the above derivation, we can also show that
						\begin{equation*}\label{eqn:omega0-constraints-missing-cov}
							\begin{aligned}
								\bigg| \sum_{i=m+1}^n\widehat\omega_{0i}s(W,V;\theta_0)-E\{Yt(V;\theta_0)\}\bigg|=o_p(1).
							\end{aligned}
						\end{equation*}
						Combining all these equations implies that $\theta_0$ is a solution to the equation for $\widehat\theta_{\MR}$ as $n\rightarrow\infty$. This shows that the proposed estimator $\widehat\theta_{\MR}$ is also a consistent estimator of $\theta_0$ when one of the models in $\Ctwo$ is correctly specified.
					\end{proof}

					To prove Theorem~\ref{thm:pi-correct}, we first present the following lemma. 
					
					\begin{lemma}\label{lem:expression-lambda-delta}
						When $\pi^1(\eta^1)$ is a correctly specified model for $\pi(V)$, we have
						\begin{equation*}
							\begin{aligned}
								\sqrt{n}\widehat\lambda&=H^{-1}\bigg[\frac{1}{\sqrt{n}}\sum_{i=1}^n\frac{R_i-\pi_i^1(\eta_0^1)}{\pi_i^1(\eta_0^1)}h_i(\eta_*,\gamma_*,\theta_*)-\frac{1}{\sqrt{n}}\sum_{i=1}^n A\big\{E(\Psi^{\otimes 2})\big\}^{-1}\Psi_i \bigg]+o_p(1),\\
								\sqrt{n}\widehat\delta&=T^{-1}\bigg[-\frac{1}{\sqrt{n}}\sum_{i=1}^n\frac{R_i-\pi_i^1(\eta_0^1)}{1-\pi_i^1(\eta_0^1)}h_i(\eta_*,\gamma_*,\theta_*)+\frac{1}{\sqrt{n}}\sum_{i=1}^n B\big\{E(\Psi^{\otimes 2})\big\}^{-1}\Psi_i \bigg]+o_p(1).
							\end{aligned}
						\end{equation*}
						where
						\[
						A=E\Bigg[\frac{h(\eta_*,\gamma_*,\theta_*)}{\pi^1(\eta^1_0)}\bigg\{ \frac{\partial\pi^1(\eta_0^1)}{\partial\eta^1}\bigg\}^\T \Bigg],\quad \text{and}\quad 
						B=E\Bigg[\frac{h(\eta_*,\gamma_*,\theta_*)}{1-\pi^1(\eta^1_0)}\bigg\{ \frac{\partial\pi^1(\eta_0^1)}{\partial\eta^1}\bigg\}^\T \Bigg].
						\]
					\end{lemma}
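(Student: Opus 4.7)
The plan is to combine a standard empirical-likelihood Taylor expansion around $\widehat\lambda=0$ (respectively $\widehat\delta=0$) with a one-step linearization that absorbs the estimation error in the plug-in nuisance parameters. I will sketch the argument for $\widehat\lambda$; the derivation for $\widehat\delta$ is completely parallel.

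Rewriting $\sum_{i=1}^m$ as $\sum_{i=1}^n R_i$, the defining equation for $\widehat\lambda$ is
\[
\frac{1}{n}\sum_{i=1}^n \frac{R_i\,\widehat h_i(\widehat\eta,\widehat\gamma,\widehat\theta)/\pi_i^1(\widehat\eta^1)}{1+\widehat\lambda^{\T}\widehat h_i(\widehat\eta,\widehat\gamma,\widehat\theta)/\pi_i^1(\widehat\eta^1)} = 0.
\]
A standard empirical-likelihood argument \citep{owen2001empirical} upgrades the bound $\widehat\lambda=o_p(1)$ established in the main text to $\widehat\lambda=O_p(n^{-1/2})$, once I verify that $n^{-1}\sum_{i=1}^n R_i\widehat h_i\widehat h_i^{\T}/\pi_i^1(\widehat\eta^1)^2\to H$ in probability. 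Taylor expanding $(1+\widehat\lambda^{\T}\cdot)^{-1}$ around zero and inverting yields
\[
\sqrt{n}\,\widehat\lambda = H^{-1}\,\frac{1}{\sqrt{n}}\sum_{i=1}^n \frac{R_i\,\widehat h_i(\widehat\eta,\widehat\gamma,\widehat\theta)}{\pi_i^1(\widehat\eta^1)} + o_p(1).
\]

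The next step rewrites this empirical process as an asymptotically linear functional of $\widehat\eta^1$. Since $n^{-1}\sum_{i=1}^n\widehat h_i=0$ by definition of $\widehat\tau^j$ and $\widehat\psi^k$, I may replace $R_i\widehat h_i/\pi_i^1(\widehat\eta^1)$ by $\varphi_i(\widehat\eta^1)\widehat h_i$, where $\varphi_i(\eta^1)=\{R_i-\pi_i^1(\eta^1)\}/\pi_i^1(\eta^1)$. A joint Taylor expansion of $\varphi_i(\widehat\eta^1)\widehat h_i(\widehat\eta,\widehat\gamma,\widehat\theta)$ around $(\eta_0^1,\eta_*^\dagger,\gamma_*,\theta_*)$, with $\eta^\dagger=(\eta^2,\ldots,\eta^J)$, decomposes the sum into the core term $n^{-1/2}\sum\varphi_i(\eta_0^1)h_i(\eta_*,\gamma_*,\theta_*)$ plus linear corrections in the nuisance estimation errors. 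Correct specification of $\pi^1$ gives $E\{\varphi_i(\eta_0^1)\mid V_i\}=0$, and each derivative $\partial h_i/\partial\xi^\dagger$ with $\xi^\dagger\in(\eta^\dagger,\gamma,\theta)$ is $V_i$-measurable after the sample-mean centering, so conditioning on $V_i$ shows that the corrections in $\sqrt n(\widehat\eta^\dagger-\eta_*^\dagger)$, $\sqrt n(\widehat\gamma-\gamma_*)$ and $\sqrt n(\widehat\theta-\theta_*)$ are all $o_p(1)$. The only surviving correction comes from $\partial\varphi_i/\partial\eta^1=-R_i\pi_i^1(\eta_0^1)^{-2}\partial\pi_i^1(\eta_0^1)/\partial\eta^1$; its expectation after multiplication by $h_i$ equals exactly $-A$, while the derivative acting through the $\pi^1(\eta^1)-\tau^1$ coordinate of $h_i$ vanishes by the same conditional-mean-zero argument. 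Substituting the MLE linearization $\sqrt n(\widehat\eta^1-\eta_0^1)=\{E(\Psi^{\otimes 2})\}^{-1}n^{-1/2}\sum_{i=1}^n\Psi_i+o_p(1)$ and multiplying by $H^{-1}$ delivers the first claim. The derivation for $\widehat\delta$ is identical with $\pi_i^1$ replaced by $1-\pi_i^1$, $R_i$ by $1-R_i$, $H$ by $T$ and $A$ by $B$; the only sign subtlety is the identity $(1-R_i)\widehat h_i/(1-\pi_i^1(\widehat\eta^1))=-\{R_i-\pi_i^1(\widehat\eta^1)\}\widehat h_i/(1-\pi_i^1(\widehat\eta^1))$, which uses $\sum\widehat h_i=0$ and produces the $-T^{-1}$ attached to the empirical-process term, while the Taylor coefficient in $\eta^1$ flips sign to $-B$ and combines into $+T^{-1}B\{E(\Psi^{\otimes 2})\}^{-1}n^{-1/2}\sum_{i=1}^n\Psi_i$ as stated.

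The main obstacle is the bookkeeping in the preceding paragraph. Because $\widehat h_i$ depends simultaneously on all components of $\widehat\eta$, $\widehat\gamma$ and $\widehat\theta$, I have to verify that every first-order correction except the one through $\eta^1$ is asymptotically negligible and to provide a uniform Taylor-remainder bound for the second-order terms. The conditional-mean-zero identity $E\{\varphi_i(\eta_0^1)\mid V_i\}=0$ under correct specification of $\pi^1$ carries most of the weight, and a standard stochastic-equicontinuity argument---implicit in the paper's appeal to empirical-process theory---supplies the uniform remainder control.
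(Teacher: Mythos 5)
Your proposal is correct and follows essentially the same route as the paper's proof: a first-order expansion of the Lagrange-multiplier estimating equation around $\lambda=0$ (resp.\ $\delta=0$) and the nuisance limits, with the corrections in $\widehat\gamma$, $\widehat\theta$ and the remaining propensity parameters killed by the mean-zero/centering structure of $\widehat h_i$, the sole surviving correction through $\widehat\eta^1$ producing $-A$ (resp.\ $+B$), and the MLE linearization of $\widehat\eta^1$ closing the argument. The paper organizes this as a telescoping five-term decomposition with explicit Donsker/$L_2$-continuity justifications where you use the $\varphi_i$ reformulation and conditional-mean-zero identities, but these are presentational rather than substantive differences.
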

					
					\begin{proof}
						By conditions imposed on the Lagrange multiplies $\widehat{\lambda}$ in the main text, we have
						\[
						0=\frac{1}{n}\sum_{i=1}^n R_i\frac{\widehat h_{i}(\widehat\eta,\widehat\gamma,\widehat\theta)/\pi_i^1(\widehat\eta^1)}{1+\widehat\lambda^\T \widehat h_{i}(\widehat\eta,\widehat\gamma,\widehat\theta)/\pi_i^1(\widehat\eta^1)}.
						\]
						Consequently,
						\begin{eqnarray}
							0&=&\frac{1}{n}\sum_{i=1}^n R_i\frac{\widehat h_{i}(\widehat\eta,\widehat\gamma,\widehat\theta)/\pi_i^1(\widehat\eta^1)}{1+\widehat\lambda^\T \widehat h_{i}(\widehat\eta,\widehat\gamma,\widehat\theta)/\pi_i^1(\widehat\eta^1)}-\frac{1}{n}\sum_{i=1}^n R_i\frac{\widehat h_{i}(\widehat\eta,\widehat\gamma,\widehat\theta)}{\pi_i^1(\widehat\eta^1)} \label{eqn:lambda-tylor}\\
							&&+\frac{1}{n}\sum_{i=1}^n R_i\frac{\widehat h_{i}(\widehat\eta,\widehat\gamma,\widehat\theta)}{\pi_i^1(\widehat\eta^1)}-\frac{1}{n}\sum_{i=1}^n R_i\frac{\widehat h_{i}(\eta_*,\widehat\gamma,\widehat\theta)}{\pi_i^1(\eta_*^1)} \label{eqn:eta-tylor}\\
							&&+\frac{1}{n}\sum_{i=1}^n R_i\frac{\widehat h_{i}(\eta_*,\widehat\gamma,\widehat\theta)}{\pi_i^1(\eta_*^1)}-\frac{1}{n}\sum_{i=1}^n R_i\frac{\widehat h_{i}(\eta_*,\gamma_*,\widehat\theta)}{\pi_i^1(\eta_*^1)} \label{eqn:gamma-tylor}\\
							&&+\frac{1}{n}\sum_{i=1}^n R_i\frac{\widehat h_{i}(\eta_*,\gamma_*,\widehat\theta)}{\pi_i^1(\eta_*^1)}-\frac{1}{n}\sum_{i=1}^n R_i\frac{\widehat h_{i}(\eta_*,\gamma_*,\theta_*)}{\pi_i^1(\eta_*^1)} \label{eqn:theta-tylor}\\
							&&+\frac{1}{n}\sum_{i=1}^n R_i\frac{\widehat h_{i}(\eta_*,\gamma_*,\theta_*)}{\pi_i^1(\eta_*^1)}. \label{eqn:constant-term}
						\end{eqnarray}
						Taking Taylor expansion of the right-hand side of~\eqref{eqn:lambda-tylor} around $\lambda=0$ leads to
						\begin{equation*}
							\begin{aligned}
								\eqref{eqn:lambda-tylor}=-\frac{1}{n}\sum_{i=1}^n R_i\frac{\widehat h_{i}(\widehat\eta,\widehat\gamma,\widehat\theta)^{\otimes 2}}{\{\pi_i^1(\widehat\eta^1)\}^2}\widehat{\lambda}+o_p(n^{-1/2})=-H\widehat{\lambda}+o_p(n^{-1/2}).
							\end{aligned}
						\end{equation*}
						Taking Taylor expansion of \eqref{eqn:eta-tylor} around $\eta=\eta_*$ leads to
						\begin{equation*}
							\begin{aligned}
								\eqref{eqn:eta-tylor}=-\frac{1}{n}\sum_{i=1}^n R_i\frac{\widehat h_{i}(\eta_*,\widehat\gamma,\widehat\theta)}{\{\pi_i^1(\eta_0^1)\}^2}\bigg\{ \frac{\partial\pi_i^1(\eta_0^1)}{\partial\eta^1}\bigg\}^\T(\widehat{\eta}^1-\eta_0^1)+o_p(n^{-1/2})=-A(\widehat{\eta}^1-\eta_0^1)+o_p(n^{-1/2}).
							\end{aligned}
						\end{equation*}
						For expression~\eqref{eqn:gamma-tylor}, one can show that $\{R_i \widehat h_{i}(\eta_*,\gamma,\widehat\theta)/\pi_i^1(\eta_0^1):\Vert \gamma-\gamma_*\Vert\leq \varepsilon\}$ forms a Donsker class and $R_i \widehat h_{i}(\eta_*,\gamma_*,\widehat\theta)/\pi_i^1(\eta_0^1)$ is $L_2$ continuous at $\gamma_*$. Therefore, we have
						\begin{equation*}
							\begin{aligned}
								\eqref{eqn:gamma-tylor}=\frac{1}{n}\sum_{i=1}^n \frac{\partial E\big\{R_i \widehat h_{i}(\eta_*,\gamma_*,\widehat\theta)/\pi_i^1(\eta_0^1) \big\}}{\partial\gamma}(\widehat{\gamma}-\gamma_*)+o_p(n^{-1/2}).
							\end{aligned}
						\end{equation*}
						Similarly, for expression~\eqref{eqn:theta-tylor}, one can show that $\{R_i\widehat h_{i}(\eta_*,\gamma_*,\theta)/\pi_i^1(\eta_0^1):\Vert \theta-\theta_*\Vert\leq \varepsilon\}$ forms a Donsker class and $R_i\widehat h_{i}(\eta_*,\gamma_*,\theta_*)$ is $L_2$ continuous at $\theta_*$. Therefore, we have
						\begin{equation*}
							\begin{aligned}
								\eqref{eqn:theta-tylor}=\frac{1}{n}\sum_{i=1}^n \frac{\partial E\big\{R_i \widehat h_{i}(\eta_*,\gamma_*,\theta_*)/\pi_i^1(\eta_0^1) \big\}}{\partial\theta}(\widehat{\theta}-\theta_*)+o_p(n^{-1/2}).
							\end{aligned}
						\end{equation*}
						It is straightforward to show that both $E\{R_i \widehat h_{i}(\eta_*,\gamma_*,\widehat\theta)/\pi_i^1(\eta_0^1) \}$ and $E\{R_i \widehat h_{i}(\eta_*,\gamma_*,\theta_*)/\pi_i^1(\eta_0^1) \}$ are equal to zero. Hence, both expressions~\eqref{eqn:gamma-tylor} and~\eqref{eqn:theta-tylor} are $o_p(n^{-1/2})$. For expression~\eqref{eqn:constant-term}, it is easy to verify that
						\begin{equation*}
							\begin{aligned}
								\eqref{eqn:constant-term}=\frac{1}{n}\sum_{i=1}^n \frac{R_i-\pi_i^1(\eta_0^1)}{\pi_i^1(\eta_0^1)} h_{i}(\eta_*,\gamma_*,\theta_*)+o_p(n^{-1/2}).
							\end{aligned}
						\end{equation*}
						Combining all the above results yields that
						\begin{equation*}
							\sqrt{n}\widehat{\lambda}=H^{-1}\bigg[\frac{1}{\sqrt{n}}\sum_{i=1}^n\frac{R_i-\pi_i^1(\eta_0^1)}{\pi_i^1(\eta_0^1)}h_i(\eta_*,\gamma_*,\theta_*)-\sqrt{n} A(\widehat{\eta}^1-\eta_0^1) \bigg]+o_p(1).
						\end{equation*}
						In addition, since $\widehat\eta^1$ is obtained via maximum likelihood estimation, we have
						\[
						\sqrt{n}(\widehat{\eta}^1-\eta_0^1)=\frac{1}{\sqrt{n}}\sum_{i=1}^n\big\{E(\Psi^{\otimes 2})\big\}^{-1}\Psi_i+o_p(1).
						\]
						Thus,
						\begin{equation*}
							\sqrt{n}\widehat{\lambda}=H^{-1}\bigg[\frac{1}{\sqrt{n}}\sum_{i=1}^n\frac{R_i-\pi_i^1(\eta_0^1)}{\pi_i^1(\eta_0^1)}h_i(\eta_*,\gamma_*,\theta_*)-\frac{1}{\sqrt{n}}\sum_{i=1}^n A\big\{E(\Psi^{\otimes 2})\big\}^{-1}\Psi_i \bigg]+o_p(1).
						\end{equation*}
						This shows the result for $\widehat{\lambda}$.
						Next, by conditions on the Lagrange multipliers $\widehat{\delta}$ in the proof of Lemma~\ref{lem:weights}, we have
						\begin{equation*}
							0=\frac{1}{n}\sum_{i=1}^n(1-R_i)\frac{\widehat h_{i}(\widehat\eta,\widehat\gamma,\widehat\theta)/\{1-\pi_i^1(\widehat\eta^1)\}}{1+\widehat\delta^\T \widehat h_{i}(\widehat\eta,\widehat\gamma,\widehat\theta)/\{1-\pi_i^1(\widehat\eta^1)\}}.
						\end{equation*}
						Then,
						\begin{eqnarray}
							0&=&\frac{1}{n}\sum_{i=1}^n(1-R_i)\frac{\widehat h_{i}(\widehat\eta,\widehat\gamma,\widehat\theta)/\{1-\pi_i^1(\widehat\eta^1)\}}{1+\widehat\delta^\T \widehat h_{i}(\widehat\eta,\widehat\gamma,\widehat\theta)/\{1-\pi_i^1(\widehat\eta^1)\}}-\frac{1}{n}\sum_{i=1}^n(1-R_i)\frac{\widehat h_{i}(\widehat\eta,\widehat\gamma,\widehat\theta)}{1-\pi_i^1(\widehat\eta^1)}\label{eqn:delta-taylor}\\
							&&+\frac{1}{n}\sum_{i=1}^n(1-R_i)\frac{\widehat h_{i}(\widehat\eta,\widehat\gamma,\widehat\theta)}{1-\pi_i^1(\widehat\eta^1)}-\frac{1}{n}\sum_{i=1}^n(1-R_i)\frac{\widehat h_{i}(\eta_*,\widehat\gamma,\widehat\theta)}{1-\pi_i^1(\eta_0^1)}\label{eqn:delta-eta-taylor}\\
							&&+\frac{1}{n}\sum_{i=1}^n(1-R_i)\frac{\widehat h_{i}(\eta_*,\widehat\gamma,\widehat\theta)}{1-\pi_i^1(\eta_0^1)}-\frac{1}{n}\sum_{i=1}^n(1-R_i)\frac{\widehat h_{i}(\eta_*,\gamma_*,\widehat\theta)}{1-\pi_i^1(\eta_0^1)}\label{eqn:delta-gamma-taylor}\\
							&&+\frac{1}{n}\sum_{i=1}^n(1-R_i)\frac{\widehat h_{i}(\eta_*,\gamma_*,\widehat\theta)}{1-\pi_i^1(\eta_0^1)}-\frac{1}{n}\sum_{i=1}^n(1-R_i)\frac{\widehat h_{i}(\eta_*,\gamma_*,\theta_*)}{1-\pi_i^1(\eta_0^1)}\label{eqn:delta-theta-taylor}\\
							&&+\frac{1}{n}\sum_{i=1}^n(1-R_i)\frac{\widehat h_{i}(\eta_*,\gamma_*,\theta_*)}{1-\pi_i^1(\eta_0^1)}.\label{eqn:delta-constant}
						\end{eqnarray}
						Similar to derivations for expressions~\eqref{eqn:lambda-tylor}--\eqref{eqn:constant-term}, we have
						\begin{eqnarray*}
							\eqref{eqn:delta-taylor}&=&-\frac{1}{n}\sum_{i=1}^n(1-R_i)\frac{\widehat h_{i}(\widehat\eta,\widehat\gamma,\widehat\theta)^{\otimes 2}}{\{1-\pi_i^1(\widehat\eta^1)\}^2}\widehat{\delta}+o_p(n^{-1/2})=-T\widehat\delta +o_p(n^{-1/2}),\\
							\eqref{eqn:delta-eta-taylor}&=&\frac{1}{n}\sum_{i=1}^n(1- R_i)\frac{\widehat h_{i}(\eta_*,\widehat\gamma,\widehat\theta)}{\{1-\pi_i^1(\eta_0^1)\}^2}\bigg\{ \frac{\partial\pi_i^1(\eta_0^1)}{\partial\eta^1}\bigg\}^\T(\widehat{\eta}^1-\eta_0^1)+o_p(n^{-1/2})=B(\widehat{\eta}^1-\eta_0^1)+o_p(n^{-1/2}),\\
							\eqref{eqn:delta-gamma-taylor}&=&\frac{1}{n}\sum_{i=1}^n \frac{\partial E\big[(1-R_i) \widehat h_{i}(\eta_*,\gamma_*,\widehat\theta)/\{1- \pi_i^1(\eta_0^1)\} \big]}{\partial\gamma}(\widehat{\gamma}-\gamma_*)+o_p(n^{-1/2})=o_p(n^{-1/2}),\\
							\eqref{eqn:delta-theta-taylor}&=&\frac{1}{n}\sum_{i=1}^n \frac{\partial E\big[(1-R_i) \widehat h_{i}(\eta_*,\gamma_*,\widehat\theta)/\{1- \pi_i^1(\eta_0^1)\} \big]}{\partial\theta}(\widehat{\theta}-\theta_*)+o_p(n^{-1/2})=o_p(n^{-1/2}),\\
							\eqref{eqn:delta-constant}&=&-\frac{1}{n}\sum_{i=1}^n \frac{R_i-\pi_i^1(\eta_0^1)}{1-\pi_i^1(\eta_0^1)} h_{i}(\eta_*,\gamma_*,\theta_*)+o_p(n^{-1/2}).
						\end{eqnarray*}
						Finally, we obtain that
						\[
						\sqrt{n}\widehat\delta=T^{-1}\bigg[-\frac{1}{\sqrt{n}}\sum_{i=1}^n\frac{R_i-\pi_i^1(\eta_0^1)}{1-\pi_i^1(\eta_0^1)}h_i(\eta_*,\gamma_*,\theta_*)+\frac{1}{\sqrt{n}}\sum_{i=1}^n B\big\{E(\Psi^{\otimes 2})\big\}^{-1}\Psi_i \bigg]+o_p(1).
						\]
						
					\end{proof}
					
					\begin{proof}[Proof of Theorem~\ref{thm:pi-correct}]
						When one of propensity score models is correctly specified, $\widehat\theta_\MR$ converges in probability to $\theta_0$ as $n\rightarrow\infty$. Note that $\sum_{i=1}^m \widehat\omega_{1i}=\sum_{i=m+1}^n\widehat\omega_{0i}=1$ and $E\{Yt(V;\theta_0)\}=E\{s(W,V;\theta_0)\}$. These results combined with the explicit forms of $\widehat\omega_{1i}$ and $\widehat\omega_{0i}$ given in the proof of Lemma~\ref{lem:weights} imply that
						\begin{equation*}
							\begin{aligned}
								o_p(1)=&\frac{\sqrt{n}}{m}\sum_{i=1}^n \frac{R_i\widehat\tau^1/\pi_i^1(\widehat\eta^1)}{1+\widehat\lambda^\T \widehat h_{i}(\widehat\eta,\widehat\gamma,\widehat\theta_{\mathrm{CAL}})/\pi_i^1(\widehat\eta^1)}\big[Y_it(V_i;\widehat\theta_\MR)-E\{Yt(V;\widehat\theta_\MR)\}\big]\\
								&-\frac{\sqrt{n}}{n-m}\sum_{i=1}^n \frac{(1-R_i)(1-\widehat\tau^1)/\{1-\pi_i^1(\widehat\eta^1)\}}{1+\widehat\delta^\T \widehat t_{i}(\widehat\eta,\widehat\gamma,\widehat\theta_{\mathrm{CAL}})/\{1-\pi_i^1(\widehat\eta^1)\}}\big[s(W_i,V_i;\widehat\theta_\MR)-E\{s(W,V;\widehat\theta_\MR)\}\big]
								\\
								\equiv&C_1-C_2.
							\end{aligned}
						\end{equation*}
						We rewrite $C_1$ and $C_2$ respectively as 
						\begin{eqnarray}
							C_1&=&\frac{\sqrt{n}\widehat\tau^1}{m}\sum_{i=1}^n \frac{R_i/\pi_i^1(\widehat\eta^1)}{1+\widehat\lambda^\T \widehat h_{i}(\widehat\eta,\widehat\gamma,\widehat\theta_\MR)/\pi_i^1(\widehat\eta^1)}\big[Y_it(V_i;\widehat\theta_\MR)-E\{Yt(V;\widehat\theta_\MR)\}\big]\label{eqn:asn-lambda0}
							\\
							&&~~~-\frac{\sqrt{n}\widehat\tau^1}{m}\sum_{i=1}^n \frac{R_i}{\pi_i^1(\widehat\eta^1)}\big[Y_it(V_i;\widehat\theta_\MR)-E\{Yt(V;\widehat\theta_\MR)\}\big]\label{eqn:asn-lambda}\\
							&&+\frac{\widehat\tau^1}{m}\sqrt{n}\sum_{i=1}^n \frac{R_i}{\pi_i^1(\widehat\eta^1)}\big[Y_it(V_i;\widehat\theta_\MR)-E\{Yt(V;\widehat\theta_\MR)\}\big]\label{eqn:asn-eta0}\\
							&&~~~-\frac{\widehat\tau^1}{m}\sqrt{n}\sum_{i=1}^n \frac{R_i}{\pi_i^1(\eta_0^1)}\big[Y_it(V_i;\widehat\theta_\MR)-E\{Yt(V;\widehat\theta_\MR)\}\big]\label{eqn:asn-eta}\\
							&&+\frac{\widehat\tau^1}{m}\sqrt{n}\sum_{i=1}^n \frac{R_i}{\pi_i^1(\eta_0^1)}\big[Y_it(V_i;\widehat\theta_\MR)-E\{Yt(V;\widehat\theta_\MR)\}\big]\label{eqn:asn-theta0}\\
							&&~~~-\frac{\widehat\tau^1}{m}\sqrt{n}\sum_{i=1}^n \frac{R_i}{\pi_i^1(\eta_0^1)}\big[Y_it(V_i;\widehat\theta_*)-E\{Yt(V;\widehat\theta_*)\}\big]
							\label{eqn:asn-theta}\\
							&&+\frac{\widehat\tau^1}{m}\sqrt{n}\sum_{i=1}^n \frac{R_i}{\pi_i^1(\eta_0^1)}\big[Y_it(V_i;\widehat\theta_*)-E\{Yt(V;\widehat\theta_*)\}\big],\label{eqn:asn-constant}
						\end{eqnarray}
						and
						\begin{eqnarray}
							C_2&=&\frac{\sqrt{n}(1-\widehat\tau^1)}{n-m}\sum_{i=1}^n \frac{(1-R_i)/\{1-\pi_i^1(\widehat\eta^1)\}}{1+\widehat\delta^\T \widehat h_{i}(\widehat\eta,\widehat\gamma,\widehat\theta)/\{1-\pi_i^1(\widehat\eta^1)\}}\big[s(W_i,V_i;\widehat\theta_\MR)-E\{s(W,V;\widehat\theta_\MR)\}\big]\label{eqn:asn-delta0}\\
							&&~~~~-\frac{\sqrt{n}(1-\widehat\tau^1)}{n-m}\sum_{i=1}^n \frac{1-R_i}{1-\pi_i^1(\widehat\eta^1)}\big[s(W_i,V_i;\widehat\theta_\MR)-E\{s(W,V;\widehat\theta_\MR)\}\big]
							\label{eqn:asn-delta}\\
							&&+\frac{\sqrt{n}(1-\widehat\tau^1)}{n-m}\sum_{i=1}^n \frac{1-R_i}{1-\pi_i^1(\widehat\eta^1)}\big[s(W_i,V_i;\widehat\theta_\MR)-E\{s(W,V;\widehat\theta_\MR)\}\big]\label{eqn:asn-delta-eta0}\\
							&&~~~~-\frac{\sqrt{n}(1-\widehat\tau^1)}{n-m}\sum_{i=1}^n \frac{1-R_i}{1-\pi_i^1(\eta_0^1)}\big[s(W_i,V_i;\widehat\theta_\MR)-E\{s(W,V;\widehat\theta_\MR)\}\big]
							\label{eqn:asn-delta-eta}\\
							&&+\frac{\sqrt{n}(1-\widehat\tau^1)}{n-m}\sum_{i=1}^n \frac{1-R_i}{1-\pi_i^1(\eta_0^1)}\big[s(W_i,V_i;\widehat\theta_\MR)-E\{s(W,V;\widehat\theta_\MR)\}\big]\label{eqn:asn-delta-theta0}\\
							&&~~~~-\frac{\sqrt{n}(1-\widehat\tau^1)}{n-m}\sum_{i=1}^n \frac{1-R_i}{1-\pi_i^1(\eta_0^1)}\big[s(W_i,V_i;\theta_*)-E\{s(W,V;\theta_*)\}\big]\label{eqn:asn-delta-theta}\\
							&&+\frac{\sqrt{n}(1-\widehat\tau^1)}{n-m}\sum_{i=1}^n \frac{1-R_i}{1-\pi_i^1(\eta_0^1)}\big[s(W_i,V_i;\theta_*)-E\{s(W,V;\theta_*)\}\big].\label{eqn:asn-delta-constant}
						\end{eqnarray}
						For expressions~\eqref{eqn:asn-lambda0} and \eqref{eqn:asn-lambda}, taking Taylor expansion around $\lambda=0$ leads to
						\begin{align*}
							\eqref{eqn:asn-lambda0}+\eqref{eqn:asn-lambda}=&-\frac{\widehat\tau^1}{m}\bigg[\sum_{i=1}^n R_i\frac{Y_it(V_i;\widehat\theta_\MR)-E\{Yt(V;\widehat\theta_\MR)\}}{\{\pi_i^1(\widehat\eta^1)\}^2}\big\{\widehat h_{i}(\widehat\eta,\widehat\gamma,\widehat\theta)\big\}^\T \bigg]\sqrt{n}\widehat{\lambda}+o_p(1)\\
							=&-F\sqrt{n}\widehat{\lambda}+o_p(1).
						\end{align*}
						Similarly, for expressions~\eqref{eqn:asn-delta0} and~\eqref{eqn:asn-delta}, taking Taylor expansion around $\delta=0$ leads to
						\begin{equation*}
							\begin{aligned}
								&	\eqref{eqn:asn-delta0}+\eqref{eqn:asn-delta}\\
								=&-\frac{1-\widehat\tau^1}{n-m}\bigg[\sum_{i=1}^n(1-R_i) \frac{s(W_i,V_i;\widehat\theta_\MR)-E\{s(W,V;\widehat\theta_\MR)\}}{\big\{1-\pi_i^1(\widehat\eta^1)\big\}^2}\big\{ \widehat h_{i}(\widehat\eta,\widehat\gamma,\widehat\theta)\big\}^\T\bigg]\sqrt{n}\widehat\delta+o_p(1)\\
								=&-G\sqrt{n}\widehat\delta+o_p(1).
							\end{aligned}
						\end{equation*}
						For the expression $\{\eqref{eqn:asn-eta0}+\eqref{eqn:asn-eta}\}-\{\eqref{eqn:asn-delta-eta0}+\eqref{eqn:asn-delta-eta}\}$, taking Taylor expansion around $\eta=\eta_0^1$ yields that
						\begin{eqnarray*}
								&&\{\eqref{eqn:asn-eta0}+\eqref{eqn:asn-eta}\}-\{\eqref{eqn:asn-delta-eta0}+\eqref{eqn:asn-delta-eta}\}\\
								=&&\frac{1}{\sqrt{n}}\sum_{i=1}^n\bigg[\frac{R_i}{\pi_i^1(\widehat\eta^1)}\Big\{Y_it(V_i;\widehat\theta_\MR)-E\big(Yt(V;\widehat\theta_\MR)\big)\Big\}\\
								&&~~~~~~~~~~~~~-\frac{1-R_i}{1-\pi_i^1(\widehat\eta^1)}\Big\{s(W_i,V_i;\widehat\theta_\MR) -E\big(s(W,V;\widehat\theta_\MR)\big)\Big\}\bigg]\\
								&&~~-\frac{1}{\sqrt{n}}\sum_{i=1}^n\bigg[\frac{R_i}{\pi_i^1(\eta_0^1)}\Big\{Y_it(V_i;\widehat\theta_\MR)-E\big(Yt(V;\widehat\theta_\MR)\big)\Big\}\\
								&&~~~~~~~~~~~~~~~~~~~-\frac{1-R_i}{1-\pi_i^1(\eta_0^1)}\Big\{s(W_i,V_i;\widehat\theta_\MR) -E\big(s(W,V;\widehat\theta_\MR)\big)\Big\}\bigg]+o_p(1)\\
								=&&\Bigg[\frac{1}{n}\sum_{i=1}^n\bigg\{\frac{-R_i}{\pi_i^1(\eta_0^1)^2}\Big(Y_it(V_i;\widehat\theta_\MR)-E\big(Yt(V;\widehat\theta_\MR)\big)\Big)\bigg(\frac{\partial\pi_i^1(\eta_0^1)}{\partial\eta^1} \bigg)^\T\\
								&&~-\frac{1-R_i}{(1-\pi_i^1(\eta_0^1))^2}\Big(s(W_i,V_i;\widehat\theta_\MR)-E\big(s(W,V;\widehat\theta_\MR)\big)\Big)\bigg(\frac{\partial\pi_i^1(\eta_0^1)}{\partial\eta^1} \bigg)^\T \bigg\} \Bigg]\\
								&&~~\times \sqrt{n}(\widehat\eta^1-\eta_0^1)+o_p(1)\\
								=&&-E\Bigg[\bigg\{ \frac{Yt(V;\theta_0)-E(Yt(V;\theta_0))}{\pi(V)}+\frac{s(W,V;\theta_0)-E(s(W,V;\theta_0))}{1-\pi(V)}\bigg\}\bigg\{\frac{\partial\pi^1(\eta_0^1)}{\partial\eta^1} \bigg\}^\T \Bigg]\\
								&&~~\times \sqrt{n}(\widehat\eta^1-\eta_0^1)+o_p(1).
						\end{eqnarray*}
						For the expression $\{\eqref{eqn:asn-theta0}+\eqref{eqn:asn-theta}\}-\{\eqref{eqn:asn-delta-theta0}+\eqref{eqn:asn-delta-theta}\}$, taking Taylor expansion around $\theta=\theta_*=\theta_0$ yields that
						\begin{eqnarray*}
								&&\{\eqref{eqn:asn-theta0}+\eqref{eqn:asn-theta}\}-\{\eqref{eqn:asn-delta-theta0}+\eqref{eqn:asn-delta-theta}\}\\
								=&&\frac{1}{\sqrt{n}}\sum_{i=1}^n\bigg[\frac{R_i}{\pi_i^1(\eta_0^1)}\Big\{Y_it(V_i;\widehat\theta_\MR)-E\big(Yt(V;\widehat\theta_\MR)\big)\Big\}\\
								&&~~~~~~~~~~~-\frac{1-R_i}{1-\pi_i^1(\eta_0^1)}\Big\{s(W_i,V_i;\widehat\theta_\MR)-E\big(s(W,V;\widehat\theta_\MR) \big)\Big\} \bigg]\\
								&&~~-\frac{1}{\sqrt{n}}\sum_{i=1}^n\bigg[\frac{R_i}{\pi_i^1(\eta_0^1)}\Big\{Y_it(V_i;\theta_0)-E\big(Yt(V;\theta_0)\big)\Big\}\\
								&&~~~~~~~~~~~-\frac{1-R_i}{1-\pi_i^1(\eta_0^1)}\Big\{s(W_i,V_i;\widehat\theta_0)-E\big(s(W,V;\widehat\theta_0) \big)\Big\} \bigg]+o_p(1)
								\\=&&\bigg[\frac{1}{n}\sum_{i=1}^n\bigg\{\frac{R_i}{\pi_i^1(\eta_0^1)}\bigg(Y_i\frac{\partial t(V_i;\theta_0)}{\partial\theta}-\frac{\partial E(Yt(V;\theta_0))}{\partial\theta}\bigg)\\
								&&~~~~~~~~~~-\frac{1-R_i}{1-\pi_i^1(\eta_0^1)}\bigg(\frac{\partial s(W_i,V_i;\theta_0)}{\partial\theta}-\frac{\partial E(s(W_i,V_i;\theta_0))}{\partial\theta}\bigg)\bigg\}
								\bigg]\sqrt{n}(\widehat\theta_\MR-\theta_0)+o_p(1)\\
								=&&E\bigg\{Y\frac{\partial t(V;\theta_0)}{\partial\theta}-\frac{\partial s(W_i,V_i;\theta_0)}{\partial\theta} \bigg\}\sqrt{n}(\widehat\theta_\MR-\theta_0)+o_p(1).
						\end{eqnarray*}
						For the expression $\eqref{eqn:asn-constant}-\eqref{eqn:asn-delta-constant}$, we have
						\begin{eqnarray*}
								&&\eqref{eqn:asn-constant}-\eqref{eqn:asn-delta-constant}\\
								=&&\frac{1}{\sqrt{n}}\sum_{i=1}^n\bigg[\frac{R_i}{\pi_i^1(\eta_0^1)}\Big\{Y_it(V_i;\theta_0)-E\big( Yt(V;\theta_0)\big)\Big\}\\
								&&~~~~~~~~~~~~-\frac{1-R_i}{1-\pi_i^1(\eta_0^1)}\Big\{s(W_i,V_i;\theta_0)-E\big( s(W,V;\theta_0)\big)\Big\} \bigg]+o_p(1).
						\end{eqnarray*}
						Combining all the above results yields that
						\begin{eqnarray*}
								o_p(1)=&&C_1-C_2\\
								=&&-F\sqrt{n}\widehat\lambda+G\sqrt{n}\widehat\delta\\
								&&~+\frac{1}{\sqrt{n}}\sum_{i=1}^n\bigg[\frac{R_i}{\pi_i^1(\eta_0^1)}\Big\{Y_it(V_i;\theta_0)-E\big( Yt(V;\theta_0)\big)\Big\}\\
								&&~~~~~~~~~~~~~~~~~-\frac{1-R_i}{1-\pi_i^1(\eta_0^1)}\Big\{s(W_i,V_i;\theta_0)-E\big( s(W,V;\theta_0)\big)\Big\} \bigg]\\
								&&~-E\Bigg[\bigg\{ \frac{Yt(V;\theta_0)-E(Yt(V;\theta_0))}{\pi(V)}+\frac{s(W,V;\theta_0)-E(s(W,V;\theta_0))}{1-\pi(V)}\bigg\}\bigg\{\frac{\partial\pi^1(\eta_0^1)}{\partial\eta^1} \bigg\}^\T \Bigg]\\
								&&~~~~~~~~~\times \sqrt{n}(\widehat\eta^1-\eta_0^1)+E\bigg\{Y\frac{\partial t(V;\theta_0)}{\partial\theta}-\frac{\partial s(W_i,V_i;\theta_0)}{\partial\theta} \bigg\}\sqrt{n}(\widehat\theta_\MR-\theta_0)+o_p(1).
						\end{eqnarray*}
						By Lemma~\ref{lem:expression-lambda-delta} and the expressions of $Q(\eta^1),A,B$, we obtain from the above expression that
						\begin{equation*}
							\begin{aligned}
								0=&\frac{1}{\sqrt{n}}\sum_{i=1}^n Q_i(\eta_0^1)-E\Bigg[\bigg\{\frac{Yt(V;\theta_0)-E(Yt(V;\theta_0))-FH^{-1}h(\eta_*,\gamma_*,\theta_*)}{\pi^1(\eta_0^1)} \\
								&~~~~~~~~+\frac{s(W,V;\theta_0)-E(s(W,V;\theta_0))-GT^{-1}h(\eta_*,\gamma_*,\theta_*)}{1-\pi^1(\eta_0^1)}\bigg\} 
								\bigg\{ \frac{\partial\pi^1(\eta_0^1)}{\partial\eta^1}\bigg\}^\T\Bigg]\sqrt{n}(\widehat\eta^1-\eta_0^1)
								\\&~+E\bigg\{Y\frac{\partial t(V;\theta_0)}{\partial\theta}-\frac{\partial s(W_i,V_i;\theta_0)}{\partial\theta} \bigg\}\sqrt{n}(\widehat\theta_\MR-\theta_0)+o_p(1).
							\end{aligned}
						\end{equation*}
						It is easy to verify that
						\begin{equation*}
							\begin{aligned}
								&E\Bigg[\bigg\{\frac{Yt(V;\theta_0)-E(Yt(V;\theta_0))-FH^{-1}h(\eta_*,\gamma_*,\theta_*)}{\pi^1(\eta_0^1)} \\
								&~~~~~~+\frac{s(W,V;\theta_0)-E(s(W,V;\theta_0))-GT^{-1}h(\eta_*,\gamma_*,\theta_*)}{1-\pi^1(\eta_0^1)}\bigg\} 
								\bigg\{ \frac{\partial\pi^1(\eta_0^1)}{\partial\eta^1}\bigg\}^\T\Bigg]
								=E\big(Q\Psi^\T\big).
							\end{aligned}
						\end{equation*}
						Thus, we have
						\begin{equation*}
							\begin{aligned}
								0=&\frac{1}{\sqrt{n}}\sum_{i=1}^n Q_i(\eta_0^1)-E\big(Q\Psi^\T\big)\sqrt{n}(\widehat\eta^1-\eta_0^1)\\
								&~+E\bigg\{Y\frac{\partial t(V;\theta_0)}{\partial\theta}-\frac{\partial s(W_i,V_i;\theta_0)}{\partial\theta} \bigg\}\sqrt{n}(\widehat\theta_\MR-\theta_0)+o_p(1)\\
								=&\frac{1}{\sqrt{n}}\sum_{i=1}^n \Big[ Q_i(\eta_0^1)-E\big(Q\Psi^\T\big)\big\{E(\Psi^{\otimes 2}) \big\}^{-1}\Psi_i\Big]\\
								&~				+E\bigg\{Y\frac{\partial t(V;\theta_0)}{\partial\theta}-\frac{\partial s(W_i,V_i;\theta_0)}{\partial\theta} \bigg\}\sqrt{n}(\widehat\theta_\MR-\theta_0)+o_p(1).
							\end{aligned}
						\end{equation*}
						This implies that
						\[
						\sqrt{n}(\widehat\theta_\MR-\theta_0)\xrightarrow{d} N\big\{0,\var(Z)\big\},
						\]
						where
						\[
						Z=\Bigg[E\bigg\{Y\frac{\partial t(V;\theta_0)}{\partial\theta}-\frac{\partial s(W,V;\theta_0)}{\partial\theta} \bigg\} \Bigg]^{-1}\Big[Q-E(Q\Psi^\T)\big\{\Psi^{\otimes 2} \big\}^{-1}\Psi \Big].
						\]
						This completes the proof of Theorem~\ref{thm:pi-correct}.
					\end{proof}
					
					\begin{proof}[Proof of Proposition~\ref{prop:dr-efficiency}]
						Consider a parametric path $\beta$ for the joint distribution of $Y$, $W$, $V$ and $R$.
						In order to find the efficient influence function for $\theta_0$, we need to first find a random variable $\Phi:=\Phi(Y,W,V,R)$ with mean 0 and
						\[
						\frac{\partial\theta_0(\beta)}{\partial\beta}\mid_{\beta=0}=E\big\{\Phi S_{\beta}(Y,W,V,R)\big\}\mid_{\beta=0},
						\]
						where $	S_{\beta}(Y,W,V,R)$ is the observed data distribution score, and $\theta_0(\beta)$ is the parameter of interest $\theta_0$ defined through moment condition~\eqref{eqn:estimatingeq} under a regular parametric submodel indexed by $\beta$ that includes the true data generating mechanism at $\beta=0$.
						
						Define $\pi_{\beta}(v)=f_{\beta}(R=1\mid V=v)$. The observed distribution function for $Y$, $W$, $V$ and $R$ is given by
						\begin{align*}\label{eqn:joint-density}
							f_{\beta}(y,w,v,r)=f_{\beta}(v)\{\pi_{\beta}(v)\}^r\{1-\pi_{\beta}(v)\}^{1-r}f_{\beta}(y\mid v)^rf_{\beta}(w\mid v)^{1-r}.
						\end{align*}
						The resulting score function is then given by
						\begin{align*}
							S_{\beta}(y,w,v,r)=(1-r)s_{\beta}(w\mid v)+r s_{\beta}(y\mid v)+\frac{r-\pi_{\beta}(v)}{\pi_{\beta}(v)\{1-\pi_{\beta}(v)\}}\dot{\pi}_{\beta}(v)+s_{\beta}(v),
						\end{align*}
						where 
						\[
						s_{\beta}(w\mid v)=\frac{\partial}{\partial\beta}\log f_{\beta}(w\mid v),\quad s_{\beta}(y\mid v)=\frac{\partial}{\partial \beta}\log f_{\beta}(y\mid v),\quad s_{\beta}(v)=\frac{\partial}{\partial\beta}\log f_{\beta}(v).
						\]
						The tangent space of this model is therefore given by:
						\begin{align}\label{eqn:tangent}
							\mathcal{T}=\big\{(1-r)s_{\beta}(w\mid v)+rs_{\beta}(y\mid v)+a(v)(r-\pi_{\beta}(v))+s_{\beta}(v)\big\},
						\end{align}
						where $\int s_{\beta}(w\mid v)f_{\beta}(w\mid v)dw=0$, $\int s_{\beta}(y\mid v)f_{\beta}(y\mid v)dy=0$, $\int s_{\beta}(v)f_{\beta}(v)dv=0$, and $a(v)$ is any square integrable function. 
						
						Recall that
						\begin{align*}
							E\{Yt(V;\theta_0)-s(W,V;\theta_0)\}=0,
						\end{align*}
						so we have
						\begin{align*}
							\frac{\partial E_{\beta}[Yt\{V;\theta_0(\beta)\}-s\{W,V;\theta_0(\beta)\}]}{\partial\beta}\mid_{\beta=0}=0.
						\end{align*}
						Differentiating under integral gives
						\begin{align*}
							\frac{\partial\theta_0(\beta)}{\partial\beta}\mid_{\beta=0}=-\Gamma^{-1}\Big[E\big\{Yt(V;\theta_0)s_{\beta}(Y\mid V)^\T-s(W,V;\theta_0)s_{\beta}(W\mid V)^\T\big\}_{\mid \beta=0} \Big].
						\end{align*}
						Choose the random variable $\Phi$ to be
						\begin{align*}
							-\Gamma^{-1}\bigg[\frac{R}{\pi(V)}\big\{Yt(V;\theta_0)-E(Yt(V;\theta_0)\mid V) \big\}
							-\frac{1-R}{1-\pi(V)}\big\{s(W,V;\theta_0)-E(s(W,V;\theta_0)\mid V)\big\}\bigg].
						\end{align*}
						It then follows that
						\begin{align*}
							&E\{\Phi\times S_{\beta}(Y,W,X,R)\}_{\mid\beta=0}\\
							=&-\Gamma^{-1}E\bigg[ 
							\frac{R}{\pi(V)}Yt(V;\theta_0)s_{\beta}(Y\mid V)-\frac{R}{\pi(V)}E\{Yt(V;\theta_0)\mid V\}s_{\beta}(Y\mid V)\\
							&~~~~~~~~~~+\frac{R-R\pi(V)}{\pi(V)^2\{1-\pi(V)\}}\big\{Yt(V;\theta_0)-E(Yt(V;\theta_0)\mid V) \big\}\dot{\pi}_{\beta}(V)\\
							&~~~~~~~~~~-\frac{1-R}{1-\pi(V)}s(W,V;\theta_0)s_{\beta}(W\mid V)+\frac{1-R}{1-\pi(V)}E\{s(W,V;\theta_0)\mid V\}s_{\beta}(W\mid V)\\
							&~~~~~~~~~~-\frac{R\pi(V)-R}{\pi(V)\{1-\pi(V)\}^2}\big\{s(W,V;\theta_0)-E(s(W,V;\theta_0)\mid V)\big\}\dot{\pi}_{\beta}(V)	\bigg]_{\mid\beta=0}\\ 	=&-\Gamma^{-1}\Big[E\big\{Yt(V;\theta_0)s_{\beta}(Y\mid V)^\T-s(W,V;\theta_0)s_{\beta}(W\mid V)^\T\big\}_{\mid \beta=0} \Big].
						\end{align*}
						Now one can also verify that $\Phi$ belongs to the tangent space $\mathcal{T}$ in~\eqref{eqn:tangent}, with the first and second terms of $\Phi$ taking the role of $rs_{\beta}(y\mid v)$ and $(1-r)s_{\beta}(w\mid v)$, respectively, and the two other components of \eqref{eqn:tangent} being identically equal to 0. Therefore, $\Phi$ is the efficient influence function of $\theta_0$.
					\end{proof}
					
					\begin{proof}[Proof of Theorem~\ref{thm:efficiency}]
						Define
						\[
						A_1=\frac{R}{\pi^1(\eta_0^1)}\Big[Y t(V;\theta_0)-E\big\{Y t(V;\theta_0) \big\}\Big],\quad B_1=\frac{R}{\pi^1(\eta_0^1)}h(\eta_*,\gamma_*,\theta_*).
						\]
						Then $F=E(A_1B_1^\T)$ and $H=E(B_1^{\otimes 2})$. We first invoke the following two facts:
						\begin{itemize} 
						    \item[(1)] 
						For any function $b(V)$, we have that	$$E\left[\left\{A_1-\frac{\displaystyle R}{\displaystyle\pi^1(\eta_0^1)}\big(E(Yt(V;\theta_0)\mid V)-E(Yt(V;\theta_0))\big)\right\}\left\{\frac{\displaystyle R}{\displaystyle\pi^1(\eta_0^1)}b(V)\right\}\right]=0.$$
						    \item[(2)] 		When $\Ctwo$ contains a correctly specified model for $f(W\mid V)$, $E\{Y t(V;\theta_0)\mid V\}-E\{Y t(V;\theta_0)\}$ is a component of $h(\eta_*,\gamma_*,\theta_*)$. Thus, the vector function $[E\{Y t(V;\theta_0)\mid V\}-E\{Y t(V;\theta_0)\}]R/\pi^1(\eta_0^1)$ is in the linear space spanned by $B_1$.
						\end{itemize} 
	Since  all components of $h(\eta_*,\gamma_*,\theta_*)$ are functions of $V$ only,	 we have that
						\begin{equation*}
							\begin{aligned}
								&FH^{-1}B_1=E(A_1B_1^\T)\big\{E(B_1^{\otimes 2}) \big\}^{-1}B_1
								\\=&E\bigg[\frac{R}{\pi^1(\eta_0^1)}\Big\{E\big(Y t(V;\theta_0)\mid V\big)-E\big(Y t(V;\theta_0) \big)\Big\}B_1^\T \bigg]\big\{E(B_1^{\otimes 2}) \big\}^{-1}B_1\\
								=&\frac{R}{\pi^1(\eta_0^1)}\Big[E\big\{Y t(V;\theta_0)\mid V\big\}-E\big\{Y t(V;\theta_0) \big\}\Big],
							\end{aligned}
						\end{equation*}
		where the first equality holds due to the fact (1) and the second equality follows from the  fact (2).	This shows that $FH^{-1}h(\eta_*,\gamma_*,\theta_*)=E\{Y t(V;\theta_0)\mid V\}-E\{Y t(V;\theta_0) \}$. Similarly, we can show that
						$GT^{-1}h(\eta_*,\gamma_*,\theta_*)=E\{s(W,V;\theta_0)\mid V\}-E\{s(W,V;\theta_0) \}$. Note that $E\{Y t(V;\theta_0) \}=E\{s(W,V;\theta_0) \}$. 
					Then we can simplify the expression of $Q=Q(\eta^1)$ as	\begin{equation*}
						\begin{aligned}
  Q\left(\eta^{1}\right)=& \frac{R}{\pi^{1}\left(\eta^{1}\right)}\left[Y t\left(V ; \theta_{0}\right)-E\left\{Y t\left(V ; \theta_{0}\right)\right\}\right]-\frac{R-\pi^{1}\left(\eta^{1}\right)}{\pi^{1}\left(\eta^{1}\right)} F H^{-1} h\left(\eta_{*}, \gamma_{*}, \theta_{*}\right) \\ &-\frac{1-R}{1-\pi^{1}\left(\eta^{1}\right)}\left[s\left(W, V ; \theta_{0}\right)-E\left\{s\left(W, V ; \theta_{0}\right)\right\}\right]+\frac{\pi^{1}\left(\eta^{1}\right)-R}{1-\pi^{1}\left(\eta^{1}\right)} G T^{-1} h\left(\eta_{*}, \gamma_{*}, \theta_{*}\right) \\
  =&\frac{R}{\pi^{1}\left(\eta^{1}\right)}\big[Y t\left(V ; \theta_{0}\right)-E\left\{Y t\left(V ; \theta_{0}\right) \mid V\right\}\big] \\ &-\frac{1-R}{1-\pi^{1}\left(\eta^{1}\right)}\big[s\left(W, V ; \theta_{0}\right)-E\left\{s\left(W, V ; \theta_{0}\right) \mid V\right\}\big].
\end{aligned}
						\end{equation*}
					A simple calculation yields that $E(Q\Psi^\T)=0$. The desired result then follows from Theorem~\ref{thm:pi-correct}.
					\end{proof}

					\bibliographystyle{apalike}
					\bibliography{mybib}

\begin{thebibliography}{}

\bibitem[Angrist and Krueger, 1992]{angrist1992effect}
Angrist, J.~D. and Krueger, A.~B. (1992).
\newblock The effect of age at school entry on educational attainment: an
  application of instrumental variables with moments from two samples.
\newblock {\em Journal of the American Statistical Association},
  87(418):328--336.

\bibitem[Bandeen-roche et~al., 1997]{bandeen1997latent}
Bandeen-roche, K., Miglioretti, D.~L., Zeger, S.~L., and Rathouz, P.~J. (1997).
\newblock Latent variable regression for multiple discrete outcomes.
\newblock {\em Journal of the American Statistical Association},
  92(440):1375--1386.

\bibitem[Blundell et~al., 2008]{blundell2008consumption}
Blundell, R., Pistaferri, L., and Preston, I. (2008).
\newblock Consumption inequality and partial insurance.
\newblock {\em American Economic Review}, 98(5):1887--1921.

\bibitem[Bostic et~al., 2009]{bostic2009housing}
Bostic, R., Gabriel, S., and Painter, G. (2009).
\newblock Housing wealth, financial wealth, and consumption: New evidence from
  micro data.
\newblock {\em Regional Science and Urban Economics}, 39(1):79--89.

\bibitem[Buchinsky et~al., 2022]{buchinsky2022estimation}
Buchinsky, M., Li, F., and Liao, Z. (2022).
\newblock Estimation and inference of semiparametric models using data from
  several sources.
\newblock {\em Journal of Econometrics}, 226(1):80--103.

\bibitem[Cao et~al., 2009]{cao2009improving}
Cao, W., Tsiatis, A.~A., and Davidian, M. (2009).
\newblock Improving efficiency and robustness of the doubly robust estimator
  for a population mean with incomplete data.
\newblock {\em Biometrika}, 96(3):723--734.

\bibitem[Chan and Yam, 2014]{chan2014oracle}
Chan, K. C.~G. and Yam, S. C.~P. (2014).
\newblock Oracle, multiple robust and multipurpose calibration in a missing
  response problem.
\newblock {\em Statistical Science}, 29(3):380--396.

\bibitem[Chen and Haziza, 2017]{chen2017multiply}
Chen, S. and Haziza, D. (2017).
\newblock Multiply robust imputation procedures for the treatment of item
  nonresponse in surveys.
\newblock {\em Biometrika}, 104(2):439--453.

\bibitem[Chen et~al., 2008]{chen2008semiparametric}
Chen, X., Hong, H., and Tarozzi, A. (2008).
\newblock Semiparametric efficiency in gmm models with auxiliary data.
\newblock {\em The Annals of Statistics}, 36(2):808--843.

\bibitem[Currie and Yelowitz, 2000]{currie2000public}
Currie, J. and Yelowitz, A. (2000).
\newblock Are public housing projects good for kids?
\newblock {\em Journal of Public Economics}, 75(1):99--124.

\bibitem[D'Orazio et~al., 2006]{d2006statistical}
D'Orazio, M., Di~Zio, M., and Scanu, M. (2006).
\newblock {\em Statistical Matching: Theory and Practice}.
\newblock John Wiley \& Sons.

\bibitem[d’Haultfoeuille, 2010]{d2010new}
d’Haultfoeuille, X. (2010).
\newblock A new instrumental method for dealing with endogenous selection.
\newblock {\em Journal of Econometrics}, 154(1):1--15.

\bibitem[Evans et~al., 2021]{evans2021doubly}
Evans, K., Sun, B., Robins, J., and Tchetgen~Tchetgen, E.~J. (2021).
\newblock Doubly robust regression analysis for data fusion.
\newblock {\em Statistica Sinica}, 31:1--23.

\bibitem[Graham et~al., 2016]{graham2016efficient}
Graham, B.~S., Pinto, C. C. d.~X., and Egel, D. (2016).
\newblock Efficient estimation of data combination models by the method of
  auxiliary-to-study tilting (ast).
\newblock {\em Journal of Business \& Economic Statistics}, 34(2):288--301.

\bibitem[Han, 2014]{han2014multiply}
Han, P. (2014).
\newblock Multiply robust estimation in regression analysis with missing data.
\newblock {\em Journal of the American Statistical Association},
  109(507):1159--1173.

\bibitem[Han et~al., 2019]{han2019general}
Han, P., Kong, L., Zhao, J., and Zhou, X. (2019).
\newblock A general framework for quantile estimation with incomplete data.
\newblock {\em Journal of the Royal Statistical Society: Series B (Statistical
  Methodology)}, 81(2):305--333.

\bibitem[Han and Wang, 2013]{han2013estimation}
Han, P. and Wang, L. (2013).
\newblock Estimation with missing data: beyond double robustness.
\newblock {\em Biometrika}, 100(2):417--430.

\bibitem[Hirukawa and Prokhorov, 2018]{hirukawa2018consistent}
Hirukawa, M. and Prokhorov, A. (2018).
\newblock Consistent estimation of linear regression models using matched data.
\newblock {\em Journal of Econometrics}, 203(2):344--358.

\bibitem[Li et~al., 2020]{li2020demystifying}
Li, W., Gu, Y., and Liu, L. (2020).
\newblock Demystifying a class of multiply robust estimators.
\newblock {\em Biometrika}, 107(4):919--933.

\bibitem[Miao et~al., 2021]{miao2021invited}
Miao, W., Li, W., Hu, W., Wang, R., and Geng, Z. (2021).
\newblock Invited commentary: estimation and bounds under data fusion.
\newblock {\em American Journal of Epidemiology}.

\bibitem[Newey and Powell, 2003]{newey2003instrumental}
Newey, W.~K. and Powell, J.~L. (2003).
\newblock Instrumental variable estimation of nonparametric models.
\newblock {\em Econometrica}, 71(5):1565--1578.

\bibitem[Ogburn et~al., 2021]{ogburn2021warning}
Ogburn, E.~L., Rudolph, K.~E., Morello-Frosch, R., Khan, A., and Casey, J.~A.
  (2021).
\newblock A warning about using predicted values from regression models for
  epidemiologic inquiry.
\newblock {\em American Journal of Epidemiology}, 190(6):1142--1147.

\bibitem[Owen, 2001]{owen2001empirical}
Owen, A.~B. (2001).
\newblock {\em Empirical Likelihood}.
\newblock New York: Chapman and Hall-CRC.

\bibitem[Pierce and Burgess, 2013]{pierce2013efficient}
Pierce, B.~L. and Burgess, S. (2013).
\newblock Efficient design for mendelian randomization studies: subsample and
  2-sample instrumental variable estimators.
\newblock {\em American Journal of Epidemiology}, 178(7):1177--1184.

\bibitem[Qin and Lawless, 1994]{qin1994empirical}
Qin, J. and Lawless, J. (1994).
\newblock Empirical likelihood and general estimating equations.
\newblock {\em {T}he Annals of Statistics}, 22(1):300--325.

\bibitem[Qin and Zhang, 2007]{qin2007empirical}
Qin, J. and Zhang, B. (2007).
\newblock Empirical-likelihood-based inference in missing response problems and
  its application in observational studies.
\newblock {\em Journal of the Royal Statistical Society: Series B (Statistical
  Methodology)}, 69(1):101--122.

\bibitem[Qin et~al., 2009]{qin2009empirical}
Qin, J., Zhang, B., and Leung, D.~H. (2009).
\newblock Empirical likelihood in missing data problems.
\newblock {\em Journal of the American Statistical Association},
  104(488):1492--1503.

\bibitem[R{\"a}ssler, 2012]{rassler2012statistical}
R{\"a}ssler, S. (2012).
\newblock {\em Statistical Matching: A Frequentist Theory, Practical
  Applications, and Alternative Bayesian Approaches}, volume 168.
\newblock Springer Science \& Business Media.

\bibitem[Ridder and Moffitt, 2007]{ridder2007econometrics}
Ridder, G. and Moffitt, R. (2007).
\newblock The econometrics of data combination.
\newblock {\em Handbook of Econometrics}, 6:5469--5547.

\bibitem[Robins and Rotnitzky, 1995]{robins1995semiparametric}
Robins, J.~M. and Rotnitzky, A. (1995).
\newblock Semiparametric efficiency in multivariate regression models with
  missing data.
\newblock {\em Journal of the American Statistical Association},
  90(429):122--129.

\bibitem[Rubin, 2004]{rubin2004multiple}
Rubin, D.~B. (2004).
\newblock {\em Multiple Imputation for Nonresponse in Surveys}, volume~81.
\newblock John Wiley \& Sons.

\bibitem[Shu and Tan, 2020]{shu2020improved}
Shu, H. and Tan, Z. (2020).
\newblock Improved methods for moment restriction models with data combination
  and an application to two-sample instrumental variable estimation.
\newblock {\em Canadian Journal of Statistics}, 48(2):259--284.

\bibitem[Tsiatis, 2006]{tsiatis2006semiparametric}
Tsiatis, A.~A. (2006).
\newblock {\em Semiparametric Theory and Missing Data}.
\newblock Springer.

\end{thebibliography}
				\end{document}